\DeclareMathOperator\supp{supp}
\newcommand{\indi}{\mathds{1}}
\DeclareMathOperator*{\esssup}{ess\,sup}
\title{Balancing detectability and performance of attacks on the control channel of Markov Decision Processes}
\author{%
  Alessio Russo\thanks{Corresponding author.}\qquad Alexandre Proutiere \\
  Division of Decision and Control Systems, EECS School\\
  KTH Royal Institute of Technology, Stockholm\\
  \texttt{\{alessior,alepro\}@kth.se}
}
\theoremstyle{definition}
\newtheorem{definition}{Definition}
\theoremstyle{lemma}
\newtheorem{lemma}{Lemma}
\theoremstyle{proposition}
\newtheorem{proposition}{Proposition}
\theoremstyle{theorem}
\newtheorem{theorem}{Theorem}
\theoremstyle{corollary}
\theoremstyle{plain}
\DeclareMathOperator*{\argmax}{arg\,max}
\DeclareMathOperator*{\argmin}{arg\,min}
\DeclareMathOperator*{\trace}{Tr}
\newcommand{\KL}{\textnormal{KL}}
\begin{document}

\maketitle

\begin{abstract}
We investigate the problem of designing optimal stealthy poisoning attacks on the control channel of Markov decision processes (MDPs). This research is motivated by the recent interest of the research community for adversarial and poisoning attacks applied to MDPs, and reinforcement learning (RL) methods. The policies resulting from these methods have been shown to be vulnerable to attacks perturbing the observations of the decision-maker. In such an attack, drawing inspiration from adversarial examples used in supervised learning, the amplitude of the adversarial perturbation is limited according to some norm, with the hope that this constraint will make the attack imperceptible. However, such constraints do not grant any level of undetectability and do not take into account the dynamic nature of the underlying Markov process. In this paper, we propose a new attack formulation, based on information-theoretical quantities, that considers the objective of minimizing the detectability of the attack as well as the performance of the controlled process. We analyze the trade-off between the efficiency of the attack and its detectability. We conclude with examples and numerical simulations illustrating this trade-off.
\end{abstract}

\section{Introduction}
The framework of Markov decision processes  (MDPs) has been successful in many applications of systems control \cite{puterman2014markov,sutton2018reinforcement}. Thanks to its simplicity, and generality, it is capable of modeling most of the dynamical processes. For unknown processes, reinforcement learning (RL) techniques have shown great potential in controlling unknown systems. As a matter of fact, during the last decade, we have witnessed an increased surge of interest in RL, where, by exploiting modern methods in Deep Learning \cite{lecun2015deep}, researchers were able to reach higher performance, sometimes surpassing human performance in games such as Go, Dota, and Atari games \cite{silver2016mastering,berner2019dota,mnih2013playing,mnih2015human}.
This increased interest  has made RL being applied more frequently in industrial applications, from temperature control in buildings \cite{chen2018optimal}, to health-care \cite{yu2019reinforcement}, financial trading \cite{deng2016deep} and more. However, as recently pointed out by Gartner and Microsoft \cite{burke2019gartner, kumar2020adversarial} , in the next years AI cyber-attacks will leverage data poisoning, or adversarial samples,  and \textit{only a small fraction of the companies have the right tools in place to secure their ML systems}.

Researchers have focused on attacks that poison the data used by RL to compute the control action. Simple types of attacks can be computed by means of the \textit{Fast Gradient Sign Method} (FGSM) \cite{goodfellow2014explaining,huang2017adversarial,pattanaik2018robust}, which computes a small perturbation of the data that minimizes some performance criterion. This attack has been shown to decrease the performance of RL agents when applied to observations of the state. Nonetheless, FGSM cannot compute optimal attacks. Instead, computing an optimal attack can be cast as an optimal control problem. This method of devising optimal attacks that poison the state observation has been shown in \cite{russo2021optimal,zhang2020robust}. Similarly, some attacks directly alter the action taken by the agent, instead of the state \cite{tan2020robustifying,tessler2019action}  However, an issue of this body of work is that detectability is measured in terms of a distance metric, that is usually taken to be the $\ell_2$-norm, or the $\ell_\infty$-norm, and the attack amplitude is constrained according to this metric. Unfortunately, this type of constraint does not take into account the dynamic nature of the underlying MDP, and therefore it is just an \textit{approximated} way to deal with detectability.

To this aim, we propose a new attack formulation based on the idea that the adversary wants to minimize \textit{detectability} as well as \textit{performance} of the agent. The problem of detectability can be framed as a hypothesis testing problem, and we motivate a new attack criterion based on the theory of \textit{quickest change detection} \cite{lai1998information,basseville1993detection,tartakovsky2014sequential}. We focus our attention on attacks on the control channel of an MDP, and frame the detectability problem as a quickest change detection problem. We provide a new definition of attack detectability, and show how to compute attacks that minimize this detectability metric as well as performance. We conclude with examples and numerical simulations illustrating this trade-off. 

\textit{Structure of the paper.} In \cref{sec:related_work}, we present the related work, and introduce the framework of Markov decision processes. In \cref{sec:optimal_attack}, we formulate the problem of optimally attacking the control channel of an MDP.  We conclude with simulations in \cref{sec:simulations}. 
\section{Related work and preliminaries}\label{sec:related_work}
\textbf{Adversarial attacks in machine learning.} Only recently  researchers have started to address the problem of adversarial attacks on machine learning methods. This interest has originally sparked from an analysis \cite{szegedy2013intriguing,goodfellow2014explaining} that showed how deep learning models are affected by the \textit{adversarial example} phenomenon. An adversarial example is a type of perturbation that carefully alters the input data of a machine learning model with the goal of reducing the performance of the model. Technically, one aims at finding a small perturbation that if added to the data can significantly decrease the model's performance. This is usually done using an attack that relies on the gradient of the loss function of the model (check the FGSM attack for an example \cite{goodfellow2014explaining}). Many other attacks have been developed using this principle, and most of the defenses use \textit{adversarial training} (i.e. the model is robustified by training on perturbed data), distilled policies or robust neural networks \cite{kurakin2016adversarial,madry2017towards,carlini2017towards,yuan2019adversarial,papernot2016distillation}. 

\textbf{Adversarial attacks in reinforcement learning.} Researchers have started to also analyze the problem for reinforcement learning agents (one can refer to \cite{chen2019adversarial} for a brief summary). Initially, the focus has been on FGSM-like attacks on the observations of a Markov process \cite{huang2017adversarial,pattanaik2018robust,behzadan2017vulnerability,lin2017tactics}, or attacks that directly affect the state of the system. The latter type of attack usually studies an adversary that can directly affect the system, and the goal is to find a policy that is robust against the worst adversary by solving a minimax game where also the adversary is trying to control the MDP \cite{morimoto2005robust,pinto2017robust}. 
However, attacks on state observations perturb only the state measurement, but not the actual state of the system. To craft this attack using FGSM-like methods one usually uses the $Q$-value of a policy, since there is no loss function to consider in RL. Nonetheless, using the $Q$-value of a policy leads to sub-par attacks. This is due to the fact that it is equivalent to find a perturbation that minimizes the instantaneous reward, whilst optimal attacks should minimize the entire trajectory of rewards. Optimal attacks on the observations of a Markov process can be found by solving an adversarial MDP, as shown in \cite{russo2021optimal,zhang2020robust}.
Attacks on the observations lead to a partially observable model (a.k.a. POMDP), and therefore it is hard to find robust policies.  Some of the defense mechanisms rely on the concept of adversarial training, policy distillation, the usage of history of data or the use of recurrent layers in neural networks \cite{chen2019adversarial, russo2021optimal}.

Similarly to attacks that directly affect the state of the systems, there are attacks on the control channel of an MDP, i.e., attacks that alter the action chosen by the victim. This is in contrast with previous studies on robust MDPs, where the transition dynamics still depend on the action chosen by the victim. In this case, the adversary sits in between the victim's policy and the MDP. In  \cite{tessler2019action} they analyze the case where the action is randomly perturbed by an adversary, and analyze how to robustify the agent's policy against these perturbations. To find a robust policy they frame the problem as a max-min game, but do not consider the problem of a stealthy attack. In \cite{tan2020robustifying} the authors consider an FGSM-like attack on the control channel, and propose adversarial training as a way to robustify the policy. In contrast,  in \cite{lee2020spatiotemporally}  to compute an attack the authors propose to solve an optimization problem that minimizes the cumulative reward over a finite horizon, subject to budget constraints. To do so, they solve the optimization problem using a projected gradient descent method, and therefore can be considered an FGSM-like method.

\textbf{Markov decision process (MDP).}
An MDP $M$ is a controlled Markov chain, described by a tuple $M=(S, A, P, r, p_0)$, where $S$ and $A$ are the state and action spaces, respectively. $P: S\times A \to \Delta(S)$ denotes the conditional state transition probability distributions ($\Delta(S)$ denote the set of distributions over $S$), i.e., $P(s'|s,a)$ is the probability to move from state $s$ to state $s'$ given that action $a$ is selected. We also write $P(s,a)$ to denote the distribution over the next state given $(s,a)$. Finally, $p_0$ is the initial distribution of the state and $r: S\times A\to [0, R^\star]$ is the reward function, with $R^\star>0$. A (randomized) control policy $\pi: S\to \Delta(A)$ determines the selected actions, and $\pi(a|s)$ denotes the probability of choosing $a$ in state $s$ under $\pi$. Here we focus on ergodic MDPs, where   any policy $\pi$ generates a positive recurrent Markov chain. The discounted value of a policy $\pi$ is defined as $V_\gamma^\pi(s) =  \mathbb{E}^\pi\left[ \sum_{t\geq0}\gamma^t r(s_t, a_t)|s_0=s\right]$ (here $a_t$ is distributed according to $\pi(\cdot|s_t)$) for any initial state $s$, and discount factor $\gamma\in(0,1)$, whilst its ergodic reward (or average reward) is defined as $h^\pi = \lim_{N\to\infty} \mathbb{E}^\pi\left[\frac{1}{N+1}\sum_{t\geq0}^N r(s_t, a_t)\right]$.

\section{Optimal attacks on the control action}\label{sec:optimal_attack}
In this section, we first model the attack problem as a sequential decision-making problem. Then, we discuss two approaches to make the attack stealthy. The first approach limits the set of actions available to the adversary. The second one uses the definition of \textit{information rate} to define stealthy attacks. Lastly, we conclude with the formulation of optimal stealthy attacks.
\begin{wrapfigure}{HR}{0.35\textwidth}
\centering
\includegraphics[width=0.35\textwidth]{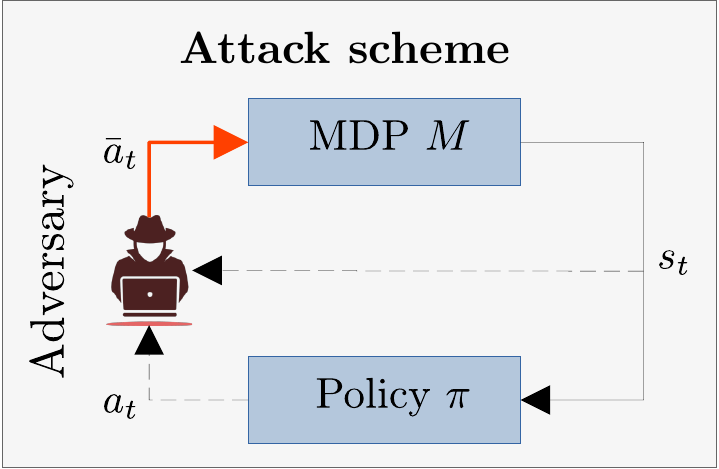}
\caption{\label{fig:scheme}Adversarial attacks on the control channel.}
\end{wrapfigure}
\subsection{The attack MDP}
\textbf{Problem description.} Here we describe the problem setting and how an \textit{adversarial agent} attacks the control channel of a decision-maker, which we call \textit{victim} in the following. First, we assume the \textit{victim} uses a stationary Markov policy $\pi$, not necessarily deterministic, with the goal of maximizing the total collected reward. We then assume that \textit{the adversary is capable of measuring the state} $s_t$, and \textit{can manipulate the action taken at the input channel of the MDP}. This condition implies that the adversary can change the action $a_t$ taken by the victim, and we denote the poisoned action by $\bar a_t$. On the other hand, \textit{the victim is not able to measure the perturbed action $\bar a_t$ chosen by the adversary}. 

Additionally, we assume \textit{the reward function $r$ is chosen by the victim} (and we assume it is known by the adversary), computed according to the state-action pair $(s_t,a_t)$ in round $t$. This is a classical assumption in control theory, where the reward is built according to the state measurements of the system. This is in stark contrast with previous studies \cite{tessler2019action}, where they considered a reward that depends on the perturbed action $\bar a_t$, and not the original one $a_t$. A consequence is that \textit{ it is not possible to use the reward function as a way to detect the presence of anomalies}, thus making the problem harder to solve. Nonetheless, this is not a necessary assumption, and one can relax it to take into account also the reward signal as explained later in the text.

\textbf{Attack MDP.} The goal of the adversary is to minimize the performance of the victim. Under these assumptions, the problem of finding an optimal attack can be cast to that of solving a Markov Decision Processes. In fact, note that for a stationary Markov policy $\pi$ the system $M\circ\pi$ can be modeled as an MDP.  As a consequence, we can define an \textit{attack MDP} $\bar M$ that the adversary wishes to control. Formally, the MDP the adversary wishes to solve is $\bar M = (S\times A, A, P^\pi, \bar r)$, where  $P^\pi(s',a'|s,a,\bar a)= \pi(a'|s') P(s'|s, \bar a)$ $\forall (s,\bar a, a, s', a')$ . The adversarial reward $\bar r: S\times A\times A \to \mathcal{R}$ (with $\mathcal{R}$ being a compact closed subset of $\mathbb{R}$) is chosen by the adversary, and can be simply put to $\bar r(s,a,\bar a) = -r(s,a)$ to obtain the classical zero-sum game formulation between two agents. A consequence of this formulation is that the adversary only needs to consider stationary Markov policies to optimally solve the problem. For an attack policy $\phi:S\times A\to \Delta(A)$ we denote the overall policy of the system by $\phi\circ\pi : S\to \Delta(A)$. Finally, we  denote respectively by $\bar V_{\bar \gamma}^{\phi\circ \pi}(s,a)=\mathbb{E}^{\phi\circ\pi}[\sum_{t\geq 0}\bar\gamma^t  \bar r(s_t, a_t, \bar a _t)| s_0=s, a_0=a]$ the discounted value of the adversarial policy for a discount $\bar \gamma$. 
Similarly, for any attack $\phi$ we denote the discounted value 
of the attacked policy $\pi$ by $V_\gamma^{\phi\circ\pi}(s) = \mathbb{E}^{\phi\circ\pi}\left[\sum_{t\geq 0} \gamma^t r(s_t,a_t)\Big | s_0=s\right]$ 
, where $s_{t+1} \sim P^\phi(\cdot|s_t,a_t)$ and $P^\phi(s'|s_t,a_t) = \mathbb{E}_{\bar a \sim \phi(s_t,a_t)}[P(s'|s_t, \bar a)]$.  Given these premises, for any attack $\phi$ we can find an upper bound of the regret of the victim, similar to the one in \cite{zhang2020robust}.
\begin{proposition}\label{proposition:bound_value}
	Consider an MDP with bounded reward $|r(s,a)|\leq R^\star$. The difference of the discounted value of the policy $\pi$, and the policy under attack $\phi\circ\pi$, is upper bounded as follows
	\begin{equation}
		\|V_\gamma^\pi - V_\gamma^{\phi\circ\pi}\|_\infty \leq  \alpha\max_{s,a,\bar a}\|P(\cdot|s, a) - P(\cdot|s,\bar a)\|_{TV},
	\end{equation}
	where $\|P(\cdot|s, a) - P(\cdot|s,\bar a)\|_{TV}$ is the total variation distance between $P(\cdot|s, a)$ and  $P(\cdot|s,\bar a)$, and $\alpha=2 \gamma R^\star/(1-\gamma)^2$ is a constant term\footnote{The reader can find all the proofs  in the appendix.}.
\end{proposition}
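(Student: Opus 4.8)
The plan is to exploit the observation that the reward term is identical in both value functions: since $r(s,a)$ depends only on the victim's original action $a$, which is drawn from $\pi(\cdot|s)$ in both the clean and the attacked system, the reward contributions cancel and the value gap is driven entirely by the mismatch between the transition kernels $P(\cdot|s,a)$ and $P^\phi(\cdot|s,a)$. First I would write the two Bellman fixed-point equations,
\[
V_\gamma^\pi(s) = \sum_a \pi(a|s)\Big[r(s,a) + \gamma\sum_{s'}P(s'|s,a)V_\gamma^\pi(s')\Big],
\]
\[
V_\gamma^{\phi\circ\pi}(s) = \sum_a \pi(a|s)\Big[r(s,a) + \gamma\sum_{s'}P^\phi(s'|s,a)V_\gamma^{\phi\circ\pi}(s')\Big],
\]
and subtract them so that the $r(s,a)$ terms vanish, leaving only $\gamma$ times a difference of expected next-state values.

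Next I would introduce a cross term by adding and subtracting $\gamma\sum_a\pi(a|s)\sum_{s'}P(s'|s,a)V_\gamma^{\phi\circ\pi}(s')$ inside this difference. This splits $V_\gamma^\pi(s) - V_\gamma^{\phi\circ\pi}(s)$ into a contraction term $\gamma\sum_a\pi(a|s)\sum_{s'}P(s'|s,a)\big[V_\gamma^\pi(s') - V_\gamma^{\phi\circ\pi}(s')\big]$, whose sup-norm is at most $\gamma\|V_\gamma^\pi - V_\gamma^{\phi\circ\pi}\|_\infty$ because $P(\cdot|s,a)$ is a probability distribution, and a kernel-mismatch term $\gamma\sum_a\pi(a|s)\sum_{s'}\big[P(s'|s,a) - P^\phi(s'|s,a)\big]V_\gamma^{\phi\circ\pi}(s')$. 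To control the mismatch term I would use the mixture structure $P^\phi(s'|s,a) = \sum_{\bar a}\phi(\bar a|s,a)P(s'|s,\bar a)$, rewriting the inner difference as $\sum_{\bar a}\phi(\bar a|s,a)\sum_{s'}\big[P(s'|s,a) - P(s'|s,\bar a)\big]V_\gamma^{\phi\circ\pi}(s')$. Applying $\big|\sum_{s'}[P(s'|s,a) - P(s'|s,\bar a)]f(s')\big| \leq 2\|f\|_\infty\|P(\cdot|s,a) - P(\cdot|s,\bar a)\|_{TV}$ with $\|V_\gamma^{\phi\circ\pi}\|_\infty \leq R^\star/(1-\gamma)$, then taking the convex combinations over $a$ and $\bar a$ and the maximum over $(s,a,\bar a)$, bounds the mismatch term by $\tfrac{2\gamma R^\star}{1-\gamma}\max_{s,a,\bar a}\|P(\cdot|s,a) - P(\cdot|s,\bar a)\|_{TV}$.

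Combining the two pieces and taking the supremum over $s$ yields the self-referential inequality $\|V_\gamma^\pi - V_\gamma^{\phi\circ\pi}\|_\infty \leq \gamma\|V_\gamma^\pi - V_\gamma^{\phi\circ\pi}\|_\infty + \tfrac{2\gamma R^\star}{1-\gamma}\max_{s,a,\bar a}\|P(\cdot|s,a) - P(\cdot|s,\bar a)\|_{TV}$, which I would solve by moving the $\gamma\|\cdot\|_\infty$ term to the left and dividing by $1-\gamma$, producing exactly the constant $\alpha = 2\gamma R^\star/(1-\gamma)^2$.

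I expect the only delicate point to be the bookkeeping that keeps the reward terms aligned so that they cancel exactly: one must verify the reward depends on $a$ and not on $\bar a$, which is precisely the modelling assumption stated earlier, so that no reward-difference term survives and the gap reduces to a pure simulation-lemma estimate. A minor technical choice is the total-variation convention; the factor of $2$ in $\alpha$ arises from using $\|P-Q\|_1 = 2\|P-Q\|_{TV}$ together with the crude value bound $\|V_\gamma^{\phi\circ\pi}\|_\infty \le R^\star/(1-\gamma)$ rather than the tighter span seminorm $\sup V - \inf V$, and a sharper argument would remove the constant $2$.
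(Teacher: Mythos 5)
Your proposal is correct and follows essentially the same route as the paper's proof: both cancel the reward terms (since $r$ depends on the victim's action $a$ only), split the one-step difference into a $\gamma$-contraction term and a kernel-mismatch term bounded by $2\frac{R^\star}{1-\gamma}\|P(\cdot|s,a)-P(\cdot|s,\bar a)\|_{TV}$, and solve the resulting self-referential inequality to obtain $\alpha = 2\gamma R^\star/(1-\gamma)^2$. The only difference is cosmetic bookkeeping in where the cross term is inserted (the paper substitutes $V^{\phi\circ\pi}=V^\pi-\Delta$ rather than adding and subtracting a mixed expectation), which leads to the same two pieces.
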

This inequality differs from the one   in \cite{russo2021optimal,zhang2020robust}, where the upper bound also depends on $\pi$. Here, instead of having a total variation on the policy, we have that the bound depends solely on the transition density. Even though the bound may seem loose, the fact that we do not get a stronger dependency on $\pi$, as in \cite{russo2021optimal,zhang2020robust}, seems to suggest that the regret problem mostly depends on the underlying MDP, than the chosen policy. Then, \textit{it may not be always possible to find a robust policy}. As a consequence, attack detection may be preferable. Since previous studies have not considered detectability from a statistical point of view, this leads us to study the problem of attack detectability.
\subsection{Detectability constraints on the attack}
In this section we study the problem of making an attack less detectable. We first consider simple constrained attacks, and argue how these attacks do not provide any stealthiness guarantee, and then proceed to study stealthy attack from a statistical point of view.

\textbf{Optimal constrained attack.}
 Stealthiness in literature has usually been defined as \textit{how close} is the perturbed signal to the real signal (using a distance function $d$, or a norm). This assumption  carries out the idea that somehow the victim is checking the goodness of the measured data.
\begin{definition}[Constrained stealthy attack]\label{def:stealthy_attack_constrianed}
Let $d:A\times A \to [0,\infty)$ be a distance function and let $\varepsilon\geq 0$. We define an attack policy $\phi$ to be $(d,\varepsilon)$-constrained if, for any $(s,a)\in S\times A$, the support of $\phi$ in $(s,a)$ is $\bar A_a(\varepsilon;d) = \{\bar a \in A: d(\bar a, a) \leq \varepsilon \}$.
\end{definition}
 This notion of stealthiness can be easily adopted to compute an optimal constrained attack. For the discounted value (similarly also for other criterion) the attack is defined to be the optimizer of the following  problem: for any  $(s,a)\in S\times A,\bar\gamma \in (0,1)$
\begin{equation}\label{eq:problem_constrained_attack}
\max_{\phi\in \Phi(\varepsilon;d)} \bar V_{\bar \gamma}^{\phi\circ\pi}(s,a) \hbox{ s.t. }  s_{t+1}\sim P(\cdot| s_t, \bar{a}_t),
\end{equation}
where $\Phi(\varepsilon;d)=\{\phi \in \Delta(A)^{S\times A} : \supp(\phi(s,a)) \subseteq \bar A_a(\varepsilon;d),\forall (s,a) \in S\times A\}$, for $\varepsilon>0$ and a metric $d$. The previous optimization problem  results in an optimal policy $\phi^\star$ that is deterministic, stationary and Markovian The problem can be easily solved both in the case the adversary knows the model, i.e., knows $(P,\pi)$, and also in the case where the model is not known. In the former case, that we denote also as \textit{white-box}, the adversary can solve the MDP by means of Value Iteration or Policy Iteration. In the latter case, that we denote as \textit{black-box} case, it is possible to use RL techniques, such as Q-learning or policy-gradient based methods, to compute an optimal attack policy $\phi^\star$. Consequently, we omit to describe an algorithm that solves \cref{eq:problem_constrained_attack}. 

However, we argue that constrained attacks are in general not stealthy. Constraining the amplitude of an attack does not necessarily imply a decrease in detectability for the following two reasons: (1) it depends on what kind of detection method the victim is using; (2) it does not consider the dynamics of the underlying process. Moreover, this notion of stealthiness tends to be useful as long as the victim can compare the measured signal with some reference signal (where the comparison is done using the metric $d$). However, this may not be always the case, or the adversary may not know what is the metric $d$. These arguments lead us to consider a different concept of stealthiness, based on statistical detectability.

\textbf{Information-theoretical stealthiness} 
We introduce a different notion of stealthiness based on information theoretical quantities. Attack detection in MDPs can be framed as a minimax quickest change detection problem (QCD), which is the problem of detecting a change point at which the statistical properties of the stochastic process undergo a change \cite{veeravalli2014quickest,tartakovsky2014sequential,basseville1993detection}. We consider minimax QCD  since we do not know when the adversary attacks the system, and therefore the distribution of the change point is unknown. 

The setup is as follows: we denote by $\nu$ the unknown change time at which the attack $\phi$ is applied to the MDP controlled by $\pi$.  Assume the reward signal $r$ is not provided by the MDP, but constructed directly from the observation of $(s,a)$ (if that is not the case, we can easily change the following argument by considering $(r,s,a)$ instead\footnote{In this case, for the problem to be well-posed, we require the rewards to be randomized; see also the appendix for a proper formulation of this particular case.}). Consequently, we consider  a sequence of non-i.i.d. observations $\{(s_t, a_t)\}_{t\geq 0}$, and  assume the conditional density of $(s_t,a_t)$ given the previous measurement is $P(\cdot|s_{t-1}, a_{t-1})$ for $t<\nu$, and $P^\phi(\cdot|s_{t-1}, a_{t-1})$ otherwise. 

The victim needs to decide in each round if she is under attack. Consequently, her decision takes the form of a stopping rule $T$ (i.e., a detection rule) with respect to the filtration $(\mathcal{F}_t)_t = (\sigma(s_0,a_0,\dots, s_t,a_t))_t$.  For this setup, a common criterion of performance due to Lorden and Pollak \cite{lorden1971procedures,pollak1985optimal} is the worst case expected delay
$\overline{\mathbb{E}}(T)=\sup_{\nu \geq 1}\esssup \mathbb{E}_\nu[(T-\nu)^+|\mathcal{F}_{\nu-1}]$, where the expectation $\mathbb{E}_\nu[\cdot]$ is taken with respect to the underlying probability measure when the change happens at time $t=\nu$. The goal of the victim  is to minimize $\overline{\mathbb{E}}(T)$ over all stopping rule satisfying $\mathbb{E}_\infty[T] \geq \overline T$, for some $\overline T>0$, when $\nu=\infty$. The constraint, in simple words, lower bounds the expected duration to false alarm. 

Having described the detection problem, we know that the following asymptotic lower bound \cite{lai1998information}  holds under some assumptions that are satisfied by ergodic Markov chains: \begin{equation}\lim_{\overline T \to\infty} \inf_{T: \mathbb{E}_\infty[T]\geq \overline T} \overline{\mathbb{E}}[T]/\ln(\overline{T}) \geq I^{-1},\end{equation}
for a constant $I>0$. 
In brief, this lower bound characterizes the sample complexity of detecting  a change in the model.  
Specifically, $I$  measures the average rate of information that the victim can use to discriminate between two hypotheses. Because of this lower bound, \textit{the rate $I$ directly affects the sample complexity of the detection problem, which increases as $I$ decreases}. Moreover. this lower bound is matched, asymptotically, by the CUSUM rule.

Therefore, the idea of the adversarial agent is to choose a policy $\phi$ that minimizes $I$. Consequently,  detectability decreases since the detection delay of the victim increases. To proceed with this idea, we first observe that there is a clear link between $I$ and the log-likelihood ratio (LLR). Let the LLR between the two models be
 $z_\phi(s,a,s',a')\coloneqq  \ln\frac{P^\phi(s',a'|s,a)}{P(s',a'|s,a)}$. Consequently, $z_\phi$ is equal to
 \begin{equation}
 	z_\phi(s,a,s',a') = \ln\frac{\mathbb{E}_{\bar a \sim \phi(s,a)}\left[\pi(a'|s')P(s'|s,\bar a)\right]}{\pi(a'|s') P(s'|s,a)} =\ln\frac{P^\phi(s'|s,a)}{P(s'|s,a)}.
\end{equation}
Note that $z_\phi$ does not depend on $a'$ and $\pi$, but solely on $\phi$ and $(s,a,s')$.  Therefore we simply write $z_\phi(s,a,s')$ in the following. 
Now, we  exploit the idea that for ergodic models the expected value of $z_\phi$ for $t\geq \nu$ converges to $I$, which, in this case, depends also on $(\pi,\phi)$, and we  denote it by $I(\pi,\phi)$. Let $\mathcal{C}(\pi) = \{(s,a): \pi(a|s)>0\}$ be the set of possible state-action pairs, and  assume that $\phi$ satisfies $P^\phi(s,a) \ll P(s,a), \forall (s,a)\in \mathcal{C}(\pi)$. Then, it is possible to prove that for ergodic MDPs the quantity $n^{-1}\sum_{t=\nu}^{\nu+n} z_\phi(s_t,a_t,s_{t+1})$ converges to  $I(\pi,\phi)$  as $n\to\infty$ (see \cite{lai1998information}). This argument motivates the following definition of stealthy attacks.

\begin{definition}[Information-theoretical stealthy attack]\label{def:stealthy_attack_it}
For $\varepsilon\geq 0$ we define an attack policy $\phi$ to be $\varepsilon$-stealthy if $ I(\pi, \phi)\leq \varepsilon$, where $ I(\pi,\phi)$ is the information rate number
\begin{align*}
	I(\pi,\phi) &= \mathbb{E}_{s\sim \mu^{\phi \circ \pi}, a\sim \pi(\cdot|s)} \left[ \KL(P^\phi(s,a), P(s,a))\right],
\end{align*}
 $\KL(\cdot,\cdot)$ is the KL-divergence, and $\mu^{\phi \circ \pi}$ is the on-policy distribution induced by $\phi$ and $\pi$.
\end{definition}

\subsection{Optimal information-theoretical stealthy attacks}\label{subsec:optimal_it_stealthy_attack}

Intuitively, \cref{def:stealthy_attack_it}  better captures the idea of a stealthy attack than \cref{def:stealthy_attack_constrianed} (note that the two ideas are not mutually exclusive, and can be  combined together).  The smaller $I(\pi,\phi)$ is, the harder it is for the victim to distinguish and decide between the hypothesis of being under attack or not. 
Based on \cref{def:stealthy_attack_it}, we can design an attack that minimizes performance as well as statistical detectability: for any  $(s,a)\in S\times A,\bar\gamma \in (0,1), \varepsilon\geq0$
\begin{equation}\label{eq:stealthy_attack_original}
\max_{\phi\in \Phi(P,\pi)} \bar V_{\bar \gamma}^{\phi\circ\pi}(s,a) \hbox{ s.t. }  I(\pi,\phi) \leq \varepsilon \hbox{ and } s_{t+1}\sim P(\cdot| s_t, \bar a_t),
\end{equation}
where  $\Phi(P,\pi)=\{\phi \in \Delta(A)^{S\times A}:  P^\phi(s,a) \ll P(s,a), \forall (s,a) \in \mathcal{C}(\pi)\}$. Unfortunately this attack formulation can not be easily solved. A reason is that $I(\pi,\phi)$ is formulated in terms of the on-policy distribution, which makes \cref{eq:stealthy_attack_original} hard to solve in presence of a discount factor. Moreover, even in case the adversary considers an ergodic reward criterion $\lim_{N\to\infty}\frac{1}{N-\nu}\mathbb{E}\left[\sum_{t=\nu}^N \bar r(s_t, a_t, \bar a_t)\right]$, instead of  $\bar V_{\bar \gamma}^{\phi\circ\pi}$, the optimization problem is still non-trivial. This is due to the dependency on $\phi$ of the KL-divergence term in $I$, which makes, in general, the maximization problem  convex in the state-action distribution induced by the policy (i.e., $\phi\circ\pi$) (therefore with multiple solutions attained at the boundaries of the feasible set; see also the appendix for a discussion).

Instead of solving \cref{eq:stealthy_attack_original}, we make use of the following observations: (1) first, we find an upper bound on $I(\pi,\phi)$ that permits us to remove the dependency on $\phi$ from the KL-divergence term;   (2) secondly, we observe that we can use a discounted criterion in place of the ergodic criterion in \cref{def:stealthy_attack_it} as long as the discount factor $\bar \gamma$ is close to $1$.

\textbf{Upper bounding $I$.} The following lemma uses the log-sum inequality to upper bound $I(\pi,\phi)$.
\begin{lemma}\label{lemma:upper_bound_I}
	Assume that $\phi$ satisfies $P(s,\bar a) \ll P(s,a)$ for every $(s,a,\bar a)\in \mathcal{C}(\pi,\phi),$ where  $\mathcal{C}(\pi,\phi) =\{(s,a,\bar a): P(s,\bar a)\ll P(s,a) \wedge \pi(a|s) \phi(\bar a|s,a)>0\}$.  Then, the information value $I(\pi,\phi)$ can be upper bounded by $\bar I(\pi,\phi)$ as follows
\begin{equation}
	I(\pi,\phi) \leq  \mathbb{E}_{s\sim \mu^{\phi \circ \pi}, a\sim \pi(\cdot|s),\bar a\sim\phi(\cdot|s,a)}\left[\KL(P(s,\bar a), P(s,a)) \right] \eqqcolon \bar I(\pi,\phi).
\end{equation}
\end{lemma}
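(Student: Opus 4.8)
The plan is to prove the bound pointwise in $(s,a)$ and then integrate against the on-policy distribution. The starting observation is that the conditional law under attack is a $\phi$-mixture of transition kernels, $P^\phi(\cdot|s,a)=\mathbb{E}_{\bar a\sim\phi(s,a)}[P(\cdot|s,\bar a)]$, while the reference law $P(\cdot|s,a)$ can be written trivially as the same mixture of the kernel $P(\cdot|s,a)$, which is constant in $\bar a$. Thus, for each fixed $(s,a)$, the inner term $\KL(P^\phi(s,a),P(s,a))$ compares two $\phi$-mixtures sharing the same mixing weights, and the claim reduces to the inner inequality $\KL(\mathbb{E}_{\bar a}[P(s,\bar a)],P(s,a))\le \mathbb{E}_{\bar a}[\KL(P(s,\bar a),P(s,a))]$.

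To establish this inner inequality I would apply the log-sum inequality separately for each successor $s'$. Setting $u_{\bar a}=\phi(\bar a|s,a)P(s'|s,\bar a)$ and $v_{\bar a}=\phi(\bar a|s,a)P(s'|s,a)$, one has $\sum_{\bar a}u_{\bar a}=P^\phi(s'|s,a)$ and $\sum_{\bar a}v_{\bar a}=P(s'|s,a)$, using $\sum_{\bar a}\phi(\bar a|s,a)=1$. The log-sum inequality then gives $P^\phi(s'|s,a)\ln\frac{P^\phi(s'|s,a)}{P(s'|s,a)}\le \sum_{\bar a}u_{\bar a}\ln\frac{u_{\bar a}}{v_{\bar a}}$, and the common factor $\phi(\bar a|s,a)$ cancels inside the logarithm, leaving $\sum_{\bar a}\phi(\bar a|s,a)P(s'|s,\bar a)\ln\frac{P(s'|s,\bar a)}{P(s'|s,a)}$ on the right.

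Summing this inequality over $s'$ recovers, on the left, $\KL(P^\phi(s,a),P(s,a))$ and, on the right, $\mathbb{E}_{\bar a\sim\phi(s,a)}[\KL(P(s,\bar a),P(s,a))]$; this is precisely the joint-convexity statement from the first paragraph, so an alternative one-line route is to invoke joint convexity of $(p,q)\mapsto\KL(p,q)$ together with Jensen's inequality. Finally, since the pointwise bound holds for every $(s,a)$, taking expectation over $s\sim\mu^{\phi\circ\pi}$ and $a\sim\pi(\cdot|s)$ preserves the inequality and yields $I(\pi,\phi)\le\bar I(\pi,\phi)$.

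The part requiring care, rather than a deep obstacle, is well-definedness of the log-ratios. The hypothesis $P(s,\bar a)\ll P(s,a)$ on $\mathcal{C}(\pi,\phi)$ ensures that wherever $P(s'|s,\bar a)>0$ we also have $P(s'|s,a)>0$, so each ratio is finite; it also implies $P^\phi(s,a)\ll P(s,a)$, since a convex combination of measures dominated by $P(s,a)$ is again dominated, guaranteeing that the outer KL is finite and that the log-sum inequality applies term by term. For a continuous state space one replaces the sums over $s'$ by integrals against a common dominating measure and uses the joint convexity of $(p,q)\mapsto p\ln(p/q)$ with Jensen's inequality in the mixing variable $\bar a$, the discrete log-sum step being the special case.
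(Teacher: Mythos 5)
Your proof is correct and follows essentially the same route as the paper: the paper also applies the log-sum inequality to the $\bar a$-mixture for each successor state, merely phrasing it through an auxiliary lemma with a general dominating measure $\rho$ on the denominator and then specializing to $\rho=\phi$, which makes the $\KL(\phi,\rho)$ term vanish and reproduces exactly your cancellation of the $\phi(\bar a|s,a)$ factors inside the logarithm. Your observation that this is just joint convexity of the KL divergence plus Jensen's inequality is an equivalent restatement, not a different argument.
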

Observe that the absolute continuity assumption $P(s,\bar a) \ll P(s,a)$ can be easily verified in those systems whose state is affected by some form of process noise (like  exogenous stochastic disturbances of the state).  We now consider the second simplification.

\textbf{Discounted information rate.} The second simplification permits us to consider a discounted version of \cref{def:stealthy_attack_it}. This change allows the use of discounted methods, which in turn permits to consider also the transient trajectory of the system in the information rate. Note, though, that this change is unnecessary if the adversary aims to maximize her ergodic reward, instead of $\bar V_{\bar \gamma}^{\phi\circ\pi}$. Since the result holds also for other type of problems, we state it in a general form. The key observation is that  for  a large discount factor we can approximate the gain of a chain with its discounted value.
\begin{proposition}\label{proposition:information_lb_discounted}
 Consider a Markov chain  $\{x_t\}_{t}$  over a finite space $\mathcal{X}$. Consider two transition functions $P_1, P_0$ over $\mathcal{X}$. Assume that for $t<\nu$ the distribution of $x_t$ given $x_{t-1}$ is $P_0(\cdot|x_{t-1})$, while for $t\geq \nu$ is $P_1(\cdot|x_{t-1})$, with $P_1(x)\ll P_0(x)$ for all $x$. Assume the chain is positive recurrent under $P_1$, with stationary measure $\mu$. Let $\gamma \in(0,1)$ and define $I_\gamma(x) = \mathbb{E}\left[\sum_{t\geq \nu}\gamma^t (1-\gamma)\KL(P_1(x_t), P_0(x_t)) |x_{\nu}=x\right]$. Then, for all $x$ we have
\begin{equation}
I= \mathbb{E}_{x\sim \mu}[\KL(P_1(x), P_0(x))] = \lim_{\gamma \to 1} I_\gamma(x).
\end{equation}
\end{proposition}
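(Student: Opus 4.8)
The plan is to reduce the statement to a standard Abelian (Cesàro-to-Abel) theorem for the time averages of an ergodic Markov chain. Write $g(x) \coloneqq \KL(P_1(x),P_0(x))$ for the per-step information. Since $\mathcal{X}$ is finite and $P_1(x)\ll P_0(x)$ for every $x$, each $g(x)=\sum_{x'}P_1(x'|x)\ln\frac{P_1(x'|x)}{P_0(x'|x)}$ is finite, so $g$ is a bounded function on $\mathcal{X}$; let $\|g\|_\infty<\infty$. Then I would shift the summation index by $s=t-\nu$ and pull the factor $\gamma^\nu$ out of the sum: using $\gamma^t(1-\gamma)=\gamma^\nu\gamma^s(1-\gamma)$ together with boundedness of $g$ (which legitimizes exchanging the finite-valued expectation with the absolutely convergent geometric sum by Tonelli), one obtains
\begin{equation*}
 I_\gamma(x)=\gamma^\nu\,(1-\gamma)\sum_{s\geq 0}\gamma^s\,\mathbb{E}\!\left[g(x_{\nu+s})\mid x_\nu=x\right].
\end{equation*}
By the Markov property the conditional expectation equals $(P_1^s g)(x)$, the $s$-step transition operator under $P_1$ applied to $g$, so $I_\gamma(x)=\gamma^\nu (1-\gamma)\sum_{s\geq 0}\gamma^s (P_1^s g)(x)$.

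Next I would invoke the ergodic theorem for the finite, positive recurrent chain under $P_1$, which by the paper's standing ergodicity assumption admits the unique stationary law $\mu$. This gives Cesàro convergence of the iterates,
\begin{equation*}
\frac1N\sum_{s=0}^{N-1}(P_1^s g)(x)\xrightarrow[N\to\infty]{}\mu(g)=\mathbb{E}_{x\sim\mu}[g(x)]=I,
\end{equation*}
and crucially the limit is the same for every starting state $x$. I use the Cesàro (time-average) form rather than the pointwise convergence $P_1^s g\to\mu(g)$ precisely because it avoids any aperiodicity hypothesis, which is not assumed here.

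The final step is the Abelian passage: for any bounded sequence $a_s$ whose Cesàro averages $\frac1N\sum_{s<N}a_s$ converge to a limit $L$, the Abel-type means converge to the same value, $\lim_{\gamma\uparrow1}(1-\gamma)\sum_{s\geq 0}\gamma^s a_s=L$. Applying this with $a_s=(P_1^sg)(x)$ and $L=I$ yields $\lim_{\gamma\to1}(1-\gamma)\sum_{s\geq 0}\gamma^s (P_1^s g)(x)=I$. Since $\gamma^\nu\to1$ as $\gamma\to1$ for the fixed finite change time $\nu$, I conclude $\lim_{\gamma\to1}I_\gamma(x)=I$ for every $x$, as claimed; the discrepancy between the stated weighting $\gamma^t$ and the natural $\gamma^{t-\nu}$ is absorbed entirely into this vanishing $\gamma^\nu$ factor. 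I expect the main obstacle to be making the Abelian step rigorous and uniform in $x$, i.e. justifying the Cesàro-implies-Abel implication (a Hardy--Littlewood/Frobenius-type argument, e.g. via summation by parts) together with the interchange of the $\gamma\to1$ limit with the expectation; the finiteness of $\mathcal{X}$ and boundedness of $g$ are exactly what make both manipulations safe.
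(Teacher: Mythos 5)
Your proof is correct, but it takes a different route from the paper's. The paper treats $\{(x_t,(1-\gamma)\KL(P_1(x_t),P_0(x_t)))\}$ as a Markov reward process and invokes the Laurent decomposition of the discounted value due to Miller and Veinott (Puterman, Theorem 8.2.3 and Corollary 8.2.4): $I_\gamma(x) = I + V(x) + e_\gamma(x)$, where the bias $V(x)$ and the remainder $e_\gamma(x)$ both vanish as $\gamma\to 1$ because of the $(1-\gamma)$ factor absorbed into the reward. You instead reduce the claim to two elementary facts: the Cesàro convergence $\frac1N\sum_{s<N}(P_1^s g)(x)\to\mu(g)$ for the finite unichain (which, as you note, needs no aperiodicity), and the Frobenius/Hardy--Littlewood implication that Cesàro summability of a bounded sequence forces Abel summability to the same limit, after which the stray $\gamma^\nu$ factor from the shift $s=t-\nu$ disappears trivially. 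Your argument is more self-contained and avoids citing MDP-specific machinery; the paper's choice of the Laurent decomposition is not gratuitous, however, since the same expansion (via the deviation matrix $H_{P_1}$) is reused immediately afterwards in \cref{proposition:discounted_information_term_error} to quantify the rate at which $I_\gamma\to I$ under uniform ergodicity, something the bare Abelian theorem does not provide. Both proofs rely on the same implicit unichain assumption (the paper says ``irreducible'') to ensure the Cesàro limit, equivalently the gain, is independent of the starting state $x$.
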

Additionally, we also have the following proposition that bounds the error we make by considering $I_\gamma$ instead of the information term $I$. This bound can also be generalized to general state-action spaces by considering the Laurent decomposition shown in \cite{jasso2009blackwell}, Theorem 3.1.
\begin{proposition}\label{proposition:discounted_information_term_error}
	Suppose the chain $\{x_t\}_{t}$ is aperiodic and uniformly ergodic under $P_1$, that is $\sup_{x\in \mathcal{X}}\|P_1^t(x) - \mu\|_{TV} \leq L \theta^t$ for some $L>0$ and $\theta\in (0,1)$.
	Let $ D^\star = \max_{x} \KL(P_1(x), P_0(x))$ and $\gamma_0 = 1/(1+(1-\theta)L)$. Then, for $\gamma \in (\gamma_0, 1)$ we have that
	\begin{equation}
	\sup_x |I_\gamma(x) - I| \leq \frac{(1-\gamma) LD^\star}{\gamma(1-\theta) - (1-\gamma)L},
	\end{equation}
	which converges to $0$ as $\gamma\to 1$.
\end{proposition}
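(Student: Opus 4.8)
The plan is to treat $I_\gamma(x)$ as a normalized discounted sum of a bounded per-step divergence and to control its distance to the stationary average $I$ through the uniform mixing hypothesis. Writing $g(x) \coloneqq \KL(P_1(x), P_0(x))$, which by definition of $D^\star$ satisfies $0 \le g \le D^\star$, and shifting time so that $\nu = 0$, I would first record the identity $I_\gamma(x) = (1-\gamma)\sum_{t\ge 0}\gamma^t (P_1^t g)(x)$, where $(P_1^t g)(x) = \mathbb{E}[g(x_t)\mid x_0 = x]$ is the $t$-step conditional expectation under $P_1$. This follows directly from the definition of $I_\gamma$ and linearity of expectation, the interchange of the infinite sum and the expectation being justified by absolute convergence since $g$ is bounded by $D^\star$ and $\gamma<1$.

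Next I would invoke \cref{proposition:information_lb_discounted} (equivalently the ergodic identity $I = \mathbb{E}_{x\sim\mu}[g(x)] = \mu g$) together with the normalization $(1-\gamma)\sum_{t\ge0}\gamma^t = 1$ to rewrite the error as a discounted average of deviations, $I_\gamma(x) - I = (1-\gamma)\sum_{t\ge 0}\gamma^t\big((P_1^t g)(x) - \mu g\big)$. The whole estimate then reduces to bounding each deviation $(P_1^t g)(x) - \mu g$ uniformly in $x$ and $t$.

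The key step, and the one requiring the most care, is the duality bound between total variation and the oscillation of $g$. Since $P_1^t(\cdot\mid x)$ and $\mu$ are both probability measures, the signed measure $P_1^t(\cdot\mid x) - \mu$ has zero total mass, so subtracting any constant $c$ from $g$ leaves the pairing unchanged: $(P_1^t g)(x) - \mu g = \sum_y \big(P_1^t(y\mid x) - \mu(y)\big)\big(g(y) - c\big)$. Choosing $c$ to be the midrange of $g$ makes $\|g - c\|_\infty \le \tfrac12(\max_y g - \min_y g) \le \tfrac12 D^\star$, whence $|(P_1^t g)(x) - \mu g| \le D^\star \|P_1^t(x) - \mu\|_{TV}$. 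Invoking the uniform ergodicity hypothesis $\|P_1^t(x) - \mu\|_{TV}\le L\theta^t$ then yields the per-term bound $D^\star L\theta^t$, and aperiodicity is used only insofar as it is what guarantees this geometric decay.

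Finally I would sum the geometric series: $|I_\gamma(x) - I| \le (1-\gamma)D^\star L\sum_{t\ge0}(\gamma\theta)^t = \frac{(1-\gamma)L D^\star}{1-\gamma\theta}$, uniformly in $x$. The stated inequality is then obtained by the elementary relaxation $1-\gamma\theta = \gamma(1-\theta) - (1-\gamma)L + (1-\gamma)(1+L) \ge \gamma(1-\theta) - (1-\gamma)L$, the right-hand side being positive for $\gamma$ above the threshold $\gamma_0$, on which range $\frac{1}{1-\gamma\theta}\le\frac{1}{\gamma(1-\theta)-(1-\gamma)L}$. Convergence to $0$ as $\gamma\to1$ is immediate from the $(1-\gamma)$ factor in the numerator, the denominator tending to $1-\theta>0$. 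The main obstacle I anticipate is purely bookkeeping: matching the exact denominator and the precise value of $\gamma_0$ against the positivity of $\gamma(1-\theta)-(1-\gamma)L$, together with the total-variation normalization constant. These affect only constants, not the $O(1-\gamma)$ decay rate, which is the substance of the statement, so I would present the clean bound $\frac{(1-\gamma)LD^\star}{1-\gamma\theta}$ first and then relax it to the claimed form.
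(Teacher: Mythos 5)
Your proof is correct, but it takes a genuinely different route from the paper's. The paper expands $I_\gamma$ via the Laurent (Miller--Veinott) decomposition, $I_\gamma = \mathbf{1}I + \sum_{k\geq 1}(-1)^{k}\left(\tfrac{1-\gamma}{\gamma}H_{P_1}\right)^{k}d$, where $H_{P_1}=\sum_{t}(P_1^{t}-P_1^\star)$ is the deviation matrix; it bounds $\|H_{P_1}\|_1\leq L/(1-\theta)$ using the mixing hypothesis and then sums the resulting operator-norm geometric series in $\tfrac{(1-\gamma)L}{\gamma(1-\theta)}$. The denominator $\gamma(1-\theta)-(1-\gamma)L$ and the threshold $\gamma_0$ are precisely the convergence condition of that series. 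You instead write $I_\gamma(x)-I=(1-\gamma)\sum_{t\geq 0}\gamma^{t}\bigl((P_1^{t}g)(x)-\mu g\bigr)$ and bound each term by $D^\star L\theta^{t}$ via the zero-mass/midrange total-variation duality, which yields the tighter bound $\tfrac{(1-\gamma)LD^\star}{1-\gamma\theta}$ valid for \emph{all} $\gamma\in(0,1)$; your algebraic relaxation $1-\gamma\theta=\gamma(1-\theta)-(1-\gamma)L+(1-\gamma)(1+L)\geq \gamma(1-\theta)-(1-\gamma)L$ then recovers the stated inequality on $(\gamma_0,1)$, and it checks out. What your approach buys is elementarity and the observation that the restriction $\gamma>\gamma_0$ is an artifact of the paper's series expansion rather than intrinsic to the estimate; what the paper's approach buys is a direct link to the Laurent decomposition already used in \cref{proposition:information_lb_discounted} and to the cited general-state-space extension. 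The only loose ends in your write-up are bookkeeping ones you already flagged: the time shift $\nu=0$ needed so that $(1-\gamma)\sum_{t\geq\nu}\gamma^{t}=1$, and the total-variation normalization convention, both of which affect only constants.
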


\textbf{Approximated stealthy attack.} Combining the two ideas, for $\bar \gamma$ sufficiently close to $1$ we define 
\[
\bar I_{\bar \gamma}(s,a) = \mathbb{E}^{\phi\circ\pi}\left[\sum_{t\geq 0}{\bar \gamma}^t (1-\bar \gamma)\KL(P(s_t, \bar a_t), P(s_t, a_t)) \Big|s_{0}=s, a_0=a\right],
\]
where, according to \cref{proposition:information_lb_discounted}, we have $\lim_{\bar \gamma \to 1} \bar I_{\bar \gamma}(s,a) = \bar I(\pi, \phi)$ for any $(s,a)$, with $I(\pi,\phi)\leq \bar I(\pi,\phi)$. By rewriting  $\bar I_{\bar \gamma}$ in terms of the discounted state distribution $\mu_{\bar \gamma}^{\phi\circ\pi}$ induced by $\phi\circ \pi$, we  get  the following formulation of an optimal stealthy attack.

\begin{proposition}\label{prop:stealthy_randomized_attack}
An optimal attack $\phi^\star$ is $\varepsilon$-stealthy, according to $\bar I_{\bar \gamma} $, if it is an optimizer of the following problem: for $\bar \gamma \in (0,1)$, for any $(s,a)\in S\times A$
\begin{equation}\label{eq:simplified_stealthy_attack_problem}
\max_{\phi \in \Phi'(P,\pi)}  \bar V_{\bar \gamma}^{\phi\circ\pi}(s,a),\textrm{ s.t. } \mathbb{E}_{s\sim \mu_{\bar \gamma}^{\phi\circ\pi},a\sim\pi(\cdot|s),\bar a \sim \phi(\cdot|s,a)}[\KL(P(s,\bar a), P(s,a))] \leq \varepsilon
\end{equation}
where $\mu_{\bar \gamma}^{\phi\circ\pi}$ is the discounted state distribution induced by $\phi\circ\pi$, and $\Phi'(P,\pi)=\{\phi \in \Delta(A)^{S\times A}:  P(s, \bar a) \ll P(s,a), \forall (s,a, \bar a) \in \mathcal{C}(\pi,\phi)\}$.  The problem in \cref{eq:simplified_stealthy_attack_problem} admits an optimal policy that is stationary, Markov and randomized.
\end{proposition}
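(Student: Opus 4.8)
The plan is to establish the two assertions of the proposition separately: first that feasibility in \eqref{eq:simplified_stealthy_attack_problem} is exactly the $\varepsilon$-stealthiness condition $\bar I_{\bar\gamma}(s,a)\le\varepsilon$, and second that the resulting constrained program admits a stationary, Markov, randomized optimizer. The first step is a change-of-variable identity. Writing the discounted state distribution as $\mu_{\bar\gamma}^{\phi\circ\pi}(s')=(1-\bar\gamma)\sum_{t\ge0}\bar\gamma^t\Pr^{\phi\circ\pi}(s_t=s'\mid s_0=s,a_0=a)$ and expanding $\bar I_{\bar\gamma}$, I would interchange the (nonnegative) sum over $t$ with the expectation, which is justified by Tonelli together with $\KL(P(s,\bar a),P(s,a))\ge0$ and its finiteness on $\mathcal{C}(\pi,\phi)$, the latter guaranteed by the absolute-continuity hypothesis inherited from \cref{lemma:upper_bound_I}. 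Using that $a_t\sim\pi(\cdot\mid s_t)$ and $\bar a_t\sim\phi(\cdot\mid s_t,a_t)$, the per-step KL collapses into an expectation over the discounted triple, giving exactly $\bar I_{\bar\gamma}(s,a)=\mathbb{E}_{s\sim\mu_{\bar\gamma}^{\phi\circ\pi},a\sim\pi(\cdot\mid s),\bar a\sim\phi(\cdot\mid s,a)}[\KL(P(s,\bar a),P(s,a))]$. Hence the constraint in \eqref{eq:simplified_stealthy_attack_problem} is precisely $\bar I_{\bar\gamma}(s,a)\le\varepsilon$, so every feasible attack, and in particular any optimizer, is $\varepsilon$-stealthy.

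For the structural claim I would recast the program as a linear program over the discounted occupancy measure of the attack MDP $\bar M$. Fixing the initial condition, define $\rho(s,a,\bar a)=(1-\bar\gamma)\sum_{t\ge0}\bar\gamma^t\Pr^{\phi\circ\pi}((s_t,a_t)=(s,a),\bar a_t=\bar a)$, a distribution on $(S\times A)\times A$ whose achievable set $\mathcal{D}$ is the polytope cut out by the Bellman flow equations for $P^\pi$ together with the marginal constraint that the $a$-coordinate follow $\pi$. Setting $c_{\KL}(s,a,\bar a)\coloneqq\KL(P(s,\bar a),P(s,a))$, the objective $\bar V_{\bar\gamma}^{\phi\circ\pi}=\frac{1}{1-\bar\gamma}\sum_{s,a,\bar a}\rho(s,a,\bar a)\bar r(s,a,\bar a)$ is linear in $\rho$, and by the identity just established the constraint $\sum_{s,a,\bar a}\rho(s,a,\bar a)c_{\KL}(s,a,\bar a)\le\varepsilon$ is also linear in $\rho$. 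The decisive point is that using the upper bound $\bar I$ of \cref{lemma:upper_bound_I} rather than $I$ renders the KL cost a fixed per-triple coefficient independent of $\phi$; this is exactly what linearizes the otherwise policy-dependent information term and is why the earlier bounding step is needed here.

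The feasible set $\{\rho\in\mathcal{D}:\sum_{s,a,\bar a}\rho(s,a,\bar a)c_{\KL}(s,a,\bar a)\le\varepsilon\}$ is a nonempty compact convex polytope, with nonemptiness supplied by the no-op attack $\phi(\bar a\mid s,a)=\indi{\{\bar a=a\}}$, which makes every KL term vanish, so the linear objective attains its maximum. From an optimal $\rho^\star$ I would recover a stationary Markov attack via $\phi^\star(\bar a\mid s,a)=\rho^\star(s,a,\bar a)/\sum_{\bar a'}\rho^\star(s,a,\bar a')$ on states of positive occupancy (arbitrary elsewhere), invoking the standard occupancy-measure/policy correspondence for constrained discounted MDPs; randomization is in general unavoidable because a single linear constraint may push the optimum onto a face of $\mathcal{D}$ rather than a deterministic vertex, though it forces randomization in at most one state. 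The main obstacle I anticipate is the bookkeeping of the occupancy-measure reformulation: one must verify that imposing the fixed victim policy $\pi$ as a marginal constraint preserves both the polytope structure and the bijection between feasible occupancy measures and stationary Markov attacks, and one should flag that, as is typical for constrained MDPs, the optimizer depends on the initial condition, so the statement is to be read per initial $(s,a)$ (equivalently, per fixed initial distribution) rather than uniformly over all of them.
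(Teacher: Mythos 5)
Your proposal is correct and follows essentially the same route as the paper: the same identity rewriting $\bar I_{\bar\gamma}$ as an expectation under the discounted state distribution, followed by the same reformulation as a linear program over the discounted occupancy measure $\xi(s,a,\bar a)=\mu_{\bar\gamma}^{\phi\circ\pi}(s)\pi(a|s)\phi(\bar a|s,a)$ and recovery of a stationary randomized policy by normalizing the optimal occupancy measure (the paper cites Theorem 8.9.6 of Puterman where you invoke the standard constrained-MDP correspondence). Your added remarks on Tonelli, feasibility via the no-op attack, and randomization in at most one state are correct refinements of details the paper leaves implicit.
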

The problem in \cref{prop:stealthy_randomized_attack}  can be cast as a linear program in terms of the discounted state-action discounted  induced by $\phi\circ \pi$ (see the appendix for more details).  Additionally, we can also determine a very useful metric, that is the hardness of detecting an attack on an MDP $M$ controlled by $\pi$. 
\begin{proposition}\label{prop:mnimum_achievable_infrate}
Consider any attack $\phi$ that results in an ergodic reward of the victim to be at-most $\rho$, with $\rho\leq \mathbb{E}_{s\sim \mu^{\pi},a\sim\pi(s)}[r(s,a)]$. Then, the minimum achievable information rate $\bar I(\pi,\phi)$ can be computed by solving the following linear program
\begin{equation}\label{eq:inf_rate_problem}
\min_{\phi \in \Phi'(P,\pi)} \bar I(\pi,\phi) \textrm{ s.t. } \mathbb{E}_{s\sim \mu^{\phi\circ\pi},a\sim\pi(\cdot|s)}[r(s,a)] \leq \rho
\end{equation}
\end{proposition}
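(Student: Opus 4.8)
The plan is to recast the optimization over attack policies $\phi$ as a linear program over \emph{occupation measures}, exactly as in the classical LP formulation of average-reward ergodic MDPs. Since the victim's policy $\pi$ is fixed and only $\phi$ varies, I would introduce the stationary joint distribution
\begin{equation*}
d(s,a,\bar a) \coloneqq \mu^{\phi\circ\pi}(s)\,\pi(a|s)\,\phi(\bar a|s,a),
\end{equation*}
which records the long-run fraction of time the chain spends in state $s$ with the victim selecting $a$ and the adversary poisoning it into $\bar a$. Because the objective is the surrogate $\bar I(\pi,\phi)$ of \cref{lemma:upper_bound_I}, in which the divergence $\KL(P(s,\bar a),P(s,a))$ depends only on the triple $(s,a,\bar a)$ and not on $\phi$, every quantity in \cref{eq:inf_rate_problem} turns into a linear functional of $d$.

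Next I would characterize the set of admissible $d$ by linear constraints. Non-negativity $d\geq 0$ and normalization $\sum_{s,a,\bar a} d(s,a,\bar a)=1$ are immediate. Writing the state marginal $\mu(s)=\sum_{a,\bar a} d(s,a,\bar a)$, stationarity of $\mu^{\phi\circ\pi}$ under the composite kernel $P^{\phi\circ\pi}(s'|s)=\sum_{a,\bar a}\pi(a|s)\phi(\bar a|s,a)P(s'|s,\bar a)$ translates into the balance equations $\sum_{a,\bar a} d(s',a,\bar a)=\sum_{s,a,\bar a} d(s,a,\bar a)P(s'|s,\bar a)$ for every $s'$. The requirement that the victim's action be generated by the fixed policy $\pi$ becomes the consistency condition $\sum_{\bar a} d(s,a,\bar a)=\pi(a|s)\sum_{a',\bar a'} d(s,a',\bar a')$ for every $(s,a)$. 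Finally, membership in $\Phi'(P,\pi)$ is enforced by fixing $d(s,a,\bar a)=0$ whenever $P(s,\bar a)\not\ll P(s,a)$, which merely restricts the domain to the triples where $\KL(P(s,\bar a),P(s,a))$ is finite. All of these are linear in $d$.

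It then remains to verify both directions of the correspondence. Given a feasible $\phi\in\Phi'(P,\pi)$, ergodicity guarantees a unique stationary $\mu^{\phi\circ\pi}$ with $\mu^{\phi\circ\pi}(s)>0$ for all $s$, and the associated $d$ satisfies all the constraints above by construction. Conversely, from a feasible $d$ I would recover $\phi(\bar a|s,a)=d(s,a,\bar a)/\sum_{\bar a'} d(s,a,\bar a')$ on the support of $\pi$ (where the denominator is positive thanks to positivity of $\mu$), and the balance and consistency constraints ensure that the induced stationary distribution is exactly the marginal $\mu$, so the recovered policy reproduces the chosen $d$. Under this correspondence the objective evaluates to $\bar I(\pi,\phi)=\sum_{s,a,\bar a} d(s,a,\bar a)\,\KL(P(s,\bar a),P(s,a))$ and the reward constraint to $\sum_{s,a,\bar a} d(s,a,\bar a)\,r(s,a)\leq\rho$, both linear; hence \cref{eq:inf_rate_problem} is equivalent to a linear program in $d$.

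The main obstacle is the reverse map: I must argue that an arbitrary feasible $d$ genuinely corresponds to an attack whose \emph{own} induced stationary distribution equals the state marginal of $d$, rather than some other invariant measure. The crucial leverage is the standing ergodicity assumption, which simultaneously guarantees $\mu(s)>0$ (so the normalization defining $\phi$ never divides by zero outside the irrelevant null set) and pins down a unique stationary distribution, closing the loop. The degradation hypothesis $\rho\leq\mathbb{E}_{s\sim\mu^\pi,a\sim\pi(s)}[r(s,a)]$ plays the secondary role of ensuring the reward constraint is compatible with a nontrivial attack, so that the LP is not vacuously infeasible.
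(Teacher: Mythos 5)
Your proposal is correct and follows essentially the same route as the paper: the appendix resolves \cref{prop:mnimum_achievable_infrate} by rewriting the problem as a linear program over the stationary occupation measure $\xi(s,a,\bar a)=\mu^{\phi\circ\pi}(s)\pi(a|s)\phi(\bar a|s,a)$, with balance equations encoding both stationarity and consistency with the fixed $\pi$, the linear reward constraint, and support constraints enforcing $\Phi'(P,\pi)$. Your write-up is in fact more explicit than the paper's, which simply displays the LP, whereas you also verify the two-way correspondence between feasible occupation measures and attack policies.
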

The problem in \cref{eq:inf_rate_problem} computes the least detectable attack in the set of attacks that make the ergodic reward of the victim to be at-most $\rho$.
The result  can be used to measure the detection hardness as a function of $\rho$, and can help the user compare how different policies $\pi$ affect detectability. Note that the optimization problem considers the ergodic reward. In case it is necessary to use a discounted reward, it is possible to prove that the problem becomes non-convex, unless one replaces $\bar I(\pi,\phi)$ with $\bar I_\gamma(\pi,\phi)$ (see the appendix for a formulation that used a discounted reward).
 
\textbf{Reinforcement learning approach.} A consequence of \cref{prop:stealthy_randomized_attack} is that deterministic policies are in general suboptimal. If  RL techniques are used,  then it is necessary to use a stochastic actor in order to find an optimal solution, otherwise stealthiness may not be guaranteed.
The problem in \cref{prop:stealthy_randomized_attack} is already formulated as a constrained MDP optimization problem \cite{altman1999constrained}, and therefore can be solved using constrained-policy optimization techniques, such as CPO \cite{achiam2017constrained} or PDO \cite{chow2017risk}, where we set the constraint to be $C(\pi,\phi)=\mathbb{E}^{\phi\circ\pi}[\sum_{t\geq 0} \bar \gamma^t (1-\bar \gamma) \KL(P(s_t,\bar a_t), P(s_t,a_t))] \leq \varepsilon$.

 Alternatively, instead of using constrained-policy optimization techniques,  it is still possible to use standard RL algorithms, like SAC \cite{haarnoja2018soft} or PPO \cite{schulman2017proximal}, by simply considering an augmented reward term $\bar r_\beta$ that penalizes the KL-divergence with a penalty factor $\beta>0$: 
$\bar r_\beta(s_t,a_t,\bar a_t)  \coloneqq \bar r(s_t,a_t,\bar a_t)  -\beta (1-\bar \gamma) z(s_t,a_t,\bar a_t,s_{t+1})$, where $z(s,a,\bar a,s') = \ln \frac{P(s'|s,\bar a)}{P(s'|s,a)}.$
   If the likelihood ratio is not known, it is possible to use a two-time scale stochastic approximation algorithm \cite{borkar2009stochastic} to both learn the policy $\phi$ and the likelihood ratio $z$, where $\phi$ is learnt at a slower pace than $z$.

\section{Examples and numerical results}\label{sec:simulations}
We now consider two significant examples: the inventory control problem, and the control of linear dynamical systems\footnote{Link to the code: \url{github.com/rssalessio/optimal-attack-control-channel-mdp}.}. We use these two examples to  demonstrate the possibility of crafting stealthy attacks capable of minimizing performance. For the inventory control problem we evaluate the efficiency of the various attack models, i.e., the constrained attack in \cref{eq:problem_constrained_attack}, the optimal stealthy randomized attack in \cref{eq:simplified_stealthy_attack_problem}, and a deterministic attack computed using the reward  $\bar r_\beta(s_t,a_t,\bar a_t)$, with penalty factor $\beta>0$, defined in the previous section. Lastly,  we study how to craft stealthy attacks that minimize the average reward of a linear system.
\subsection{The inventory control problem}
\textbf{Description.} The inventory control problem is a widely known problem in literature (see, e.g., \cite{szepesvari2010algorithms}), and concerns the problem of managing an inventory of fixed maximum size $N$ in  face of uncertain demand. For brevity, the details of this  problem can be found in the appendix.

\textbf{Attack evaluation.} In the left plot of \cref{fig:inventory_attack_evaluation} are shown results for the various attacks as function of their respective parameters. For the constrained attack in \cref{eq:problem_constrained_attack} we used a distance function $d(s,s')=|s-s'|$, and constrained the set of available actions to $\mathcal{C}(\pi^\star)$ in order to avoid that the information rate goes to infinity. We evaluated the best attack policy $\phi^\star$ for each problem against the best policy $\pi^\star(s) \in \argmax_\pi V_\gamma^\pi(s)$, and we plotted the normalized average discounted reward of the victim's policy. Since the reward depends also on the next state, $r(s,a)$ is an expectation that takes into account the distribution of the next state.
We see that the optimal randomized attack according to \cref{eq:simplified_stealthy_attack_problem} (orange curve) achieves larger performance decrease as well as lower detectability, due to a lower value of $I$. On the other hand, the deterministic attack found using the reward $\bar r_\beta$ shows a discrete behavior: for values of $\beta$ approximately lower than $12$ we have $I\approx 0.56$, and $I=0$ otherwise. The fact that for decreasing $\beta$ we see a decreasing reward, but constant $I$, is due to the fact that the attack is decreasing the detectability during the transient, and not at stationarity, since we are using $\bar I_{\gamma}$, a discounted version of the information rate. 

\textbf{Attack detection}. We  evaluated attack detectability using the optimal CUSUM detector $T_c=\inf\{t: c_t \geq c\}$, with $c_t=(\max_{1\leq k\leq t} \sum_{n=k}^t z_\phi(s_n, a_n, s_{n+1}))^+$, and a Generalized Likelihood Ratio (GLR) rule $T_g =\inf\{t:  g_t \geq c\}$, with $g_t=(\max_{1\leq k\leq t} \sup_{P_\phi} \sum_{n=k}^t z_\phi(s_n, a_n, s_{n+1}))^+$ (details regarding the implementation can be found in the appendix). The threshold $c$ in the detectors can be chosen according to the desired false alarm rate over a number of samples. For the CUSUM detector, for a probability of false alarm rate $\delta$ over $m$ samples, with $m> I^{-1} \ln \delta^{-1}$, we have that $c$ should satisfy $2me^{-c}=\delta$ to achieve the asymptotic lower bound as $\delta\to 0$ \cite{lai1998information}. For the attacks, we have chosen values of the constraints that yield a similar decrease in performance, that is, $\varepsilon=3$ in \cref{eq:problem_constrained_attack},  $\varepsilon=0.21$ in \cref{eq:simplified_stealthy_attack_problem}, and $\beta=6.2$ for the deterministic attack with penalty $\beta$.  The middle plot in \cref{fig:inventory_attack_evaluation} depicts the statistics $c_t,g_t$ for the attacks applied when the system had already converged to the stationary distribution ($t=0$ denotes the round $\nu$ at which the attack starts). We see that the orange curve, which corresponds to the attack in \cref{eq:simplified_stealthy_attack_problem}, is the least detectable one, and it takes roughly $3$ times more to detect this attack than the one in \cref{eq:problem_constrained_attack}, even though the performance decrease is similar. Finally, the right-most plot in \cref{fig:inventory_attack_evaluation} shows the goodness of approximating $\bar I$ with $\bar I_{\bar\gamma}$, where we computed $\phi^\star$ according to \cref{eq:simplified_stealthy_attack_problem} for different values of $(\varepsilon,\bar\gamma)$. As expected from \cref{proposition:discounted_information_term_error}, for large values of $\bar \gamma$ the two quantities coincide. Moreover, interestingly we observe that $\bar I_{\bar\gamma}\leq \bar I$ for every pair $(\varepsilon,\bar\gamma)$.
\begin{figure}[t]
	\centering
	\includegraphics[width=1\textwidth]{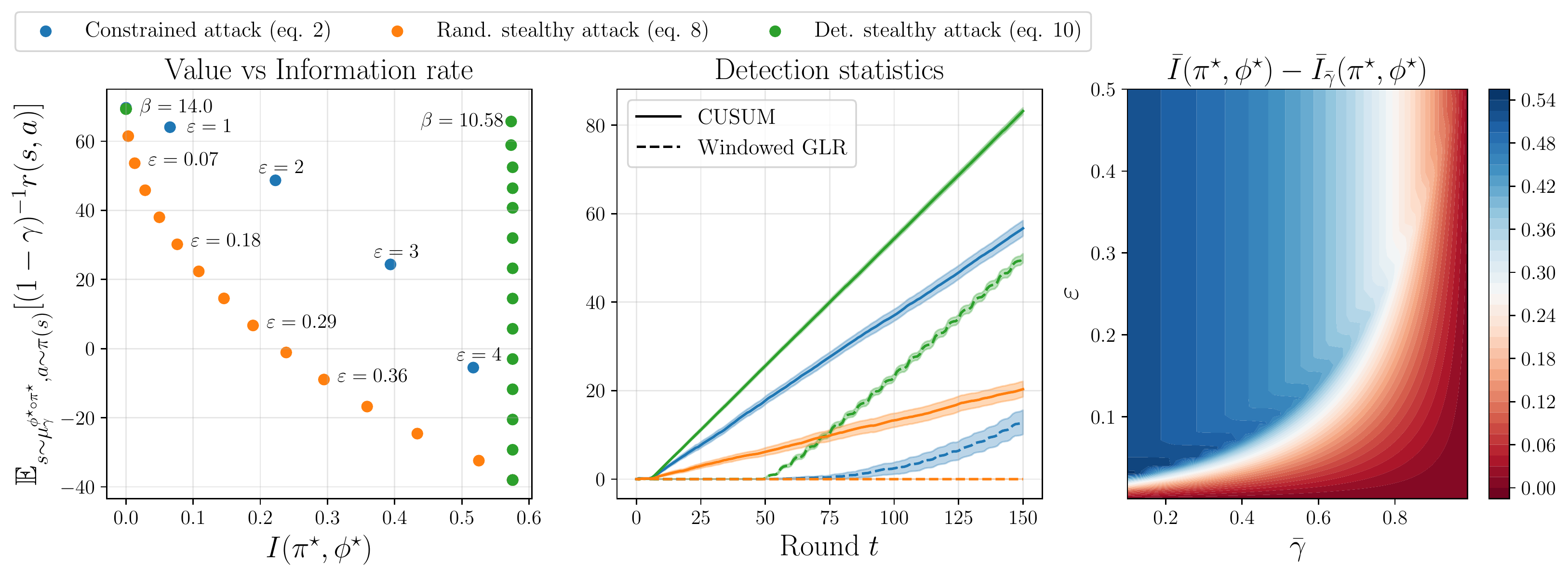}
	\caption{Attack evaluation of the inventory problem. From left to right: average normalized discounted reward of the victim vs the information rate $I$; detection statistics $c_t$ and $g_t$ for the optimal CUSUM detector and a windowed GLR rule; difference between  $\bar I$ and the discounted information rate $\bar I_{\bar \gamma}$ for $\phi^\star$, which was computed according to \cref{eq:simplified_stealthy_attack_problem} using different values of $(\varepsilon,\bar\gamma)$\protect\footnotemark[4].}
	\label{fig:inventory_attack_evaluation}
\end{figure}
\footnotetext[4]{Results were averaged over 100 simulations, and shadowed areas indicate  $99\%$ confidence interval.}
\subsection{Optimal attacks on linear dynamical systems}
\begin{figure}[b]
\centering

\subfloat[Statistics of the attack for different values of $\beta$]{\includegraphics[width=0.61\linewidth]{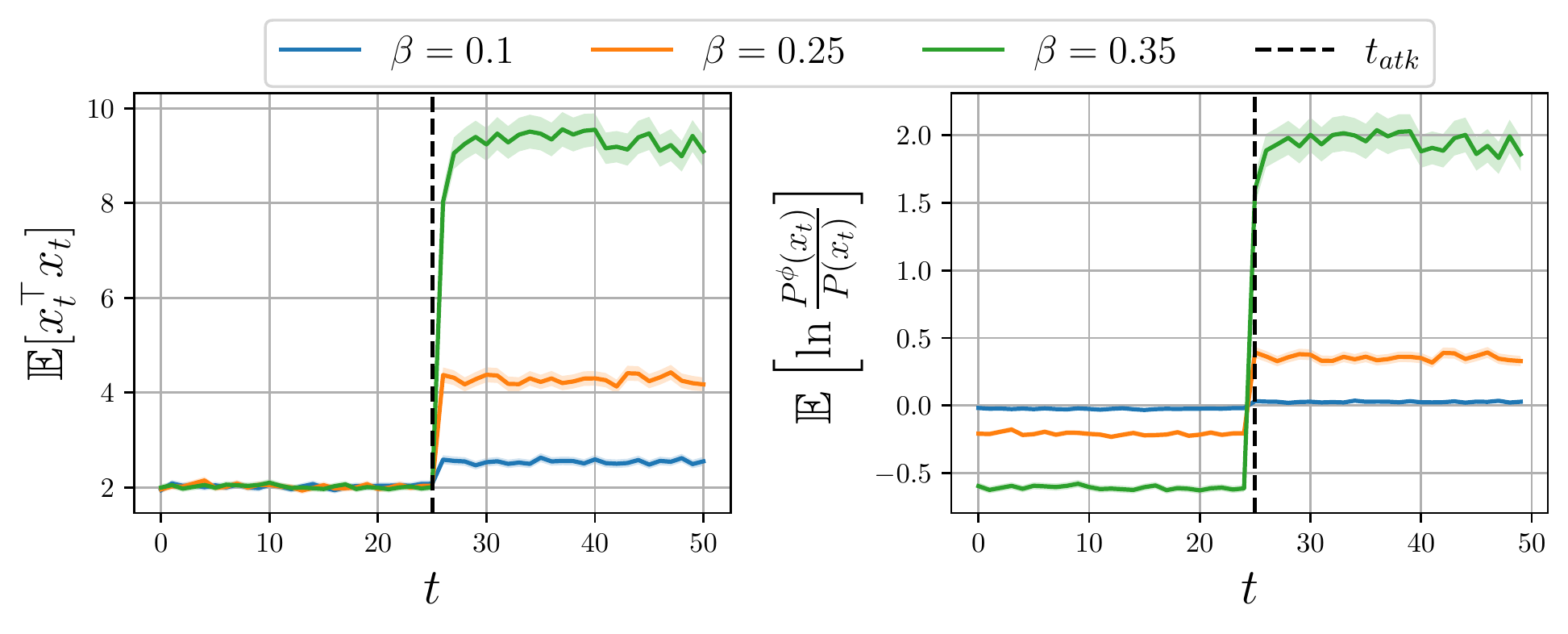}}
\quad
\subfloat[{Value of $I$ and $\mathbb{E}_{x\sim \mu}[x^\top x]$}]{\includegraphics[width=0.33\linewidth]{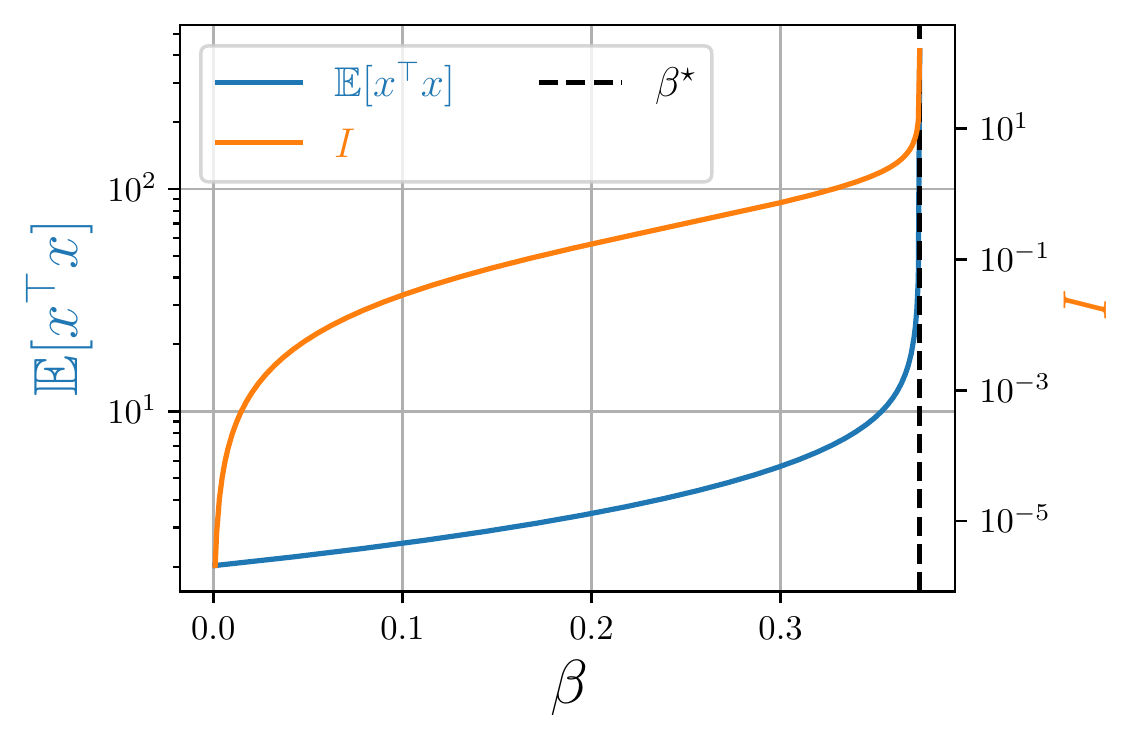}}

\caption{Attack simulation on a 2-dimensional system. The attack begins at $t_{atk}=25$ rounds.  On the left are shown the statistics of the attack for $\beta\in \{0.1, 0.25, 0.35\}$: the plots display the expected value of $x_t^\top x_t$ and  the log-likelihood ratio $z_t=\ln\frac{P^\phi(x_t)}{P(x_t)}$. On the right are depicted the asymptotic value of $\mathbb{E}_{x\sim \mu}[x^\top x]$ and $I$ as a function of $\beta$. Shadowed areas depict $95\%$ confidence interval.}
\label{fig:atk_linear_sys}
\end{figure}
We now turn our attention to linear dynamical systems.  Linear systems are of interests, since these are widely used models. Let us consider the following model
\[x_{t+1} = Ax_t + B(u_t+\bar u_t) + w_t,\quad x_0=0\]
where $x_n \in \mathbb{R}^n$ is the state at time $t$, $u_t\in \mathbb{R}^m$ is the control action, $\bar u_t \in \mathbb{R}^m$ is the attacker's action and $w_t$ is i.i.d. Gaussian noise, distributed according to $ \mathcal{N}(0, \Sigma), \Sigma\in\mathbb{S}^n_+$. For simplicity, assume $B$ is full-column rank, and assume the control policy of the victim is deterministic, of the type $u_t=Kx_t$, so that $L\coloneqq A+BK$ is Schur.

Assume that $\bar u_t$ is measurable with respect to the filtration $\sigma(x_0,\dots, x_t)$. We are interested in studying the adversarial problem with a penalty factor $\beta>0$ on the value  term:
\begin{equation}\label{eq:opt_problem_linear_sys}
 \min_{\bar u_0, \dots, \bar u_{T-1}} \frac{1}{T} \mathbb{E}\left[ \sum_{t=1}^{T-1}\ln\frac{P^{\bar u_t}(x_{t+1}|x_t, u_t)}{P(x_{t+1}|x_t, u_t)}-\beta \sum_{t=1}^T  x_t^\top x_t \right] \hbox{ s.t. } P^{\bar u}(x'|x,u)=\mathbb{E}_{\bar u}[P(x'|x,u+\bar u)]
\end{equation}
where $T$ is the horizon length, and $\beta>0$ is a penalty term that balances the trade-off between detectability and impact.

First, we note that the attack is not stealthy if the closed-loop system is unstable. This stability condition imposes some requirements on the set of possible values of $\beta$. Not surprisingly, we find that the constraint  depends on the noise level $\Sigma$. Secondly, as expected, we find that random attacks are, in general, better than deterministic attacks. Finally, we also note that directly optimizing over the distribution of $\bar u_t$ is a hard problem to solve. Indeed, we find that the first-order condition is an integral equation that does not admit a simple closed-form solution, unless one fixes a distribution family on $\bar u_t$. Therefore we study the problem of finding the optimal deterministic attack, and the optimal attack distributed according to a Gaussian distribution.

\begin{theorem}
	Let $J_d^\star$ be the solution of $\cref{eq:opt_problem_linear_sys}$ when $\bar u_t$ is a deterministic function of $x_t$, and $J_g^\star$ when $\bar u_t$ is distributed according to a Gaussian distribution $\mathcal{N}(\theta_t, V_t)$.
	Then, there exists $\beta^\star>0$ such that for all $\beta\in (0,\beta^\star)$ an optimal stealthy attack exists, and $J_g^\star,J_d^\star$ are both finite, satisfying $J_g^\star < J_d^\star$. 
	The optimal Gaussian attack is given by $\theta_t=\beta B^+ F_t^{-1}P_{t+1}L x_t$ and $V_t=\beta B^+ F_t^{-1}P_{t+1}\Sigma (B^+)^\top$,
	where $F_t= \left(\frac{1}{2}\Sigma^{-1}-\beta P_{t+1}\right) $, and $P_t$ satisfies
	\begin{equation}
		 P_t =I_n + L^\top P_{t+1} (I-2\beta \Sigma P_{t+1})^{-1}L.
	\end{equation}
	The value of $\beta^\star$ is given by $\beta^\star =\min(\beta_0,\beta_1)$, where $\beta_0=\inf\{\beta>0: \frac{1}{2}\Sigma^{-1}-\beta \bar P\prec  0\}$ and $\beta_1=\inf\{\beta>0: \frac{\beta}{2}\bar K^\top B^\top \Sigma^{-1}B\bar K - I\succ 0\}$, with $\bar P$ being the stationary solution of $P_t$ and $\bar K= B^+  \left(\frac{1}{2}\Sigma^{-1}-\beta \bar P\right)^{-1} \bar P L$.
\end{theorem}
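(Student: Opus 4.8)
The plan is to recognize \cref{eq:opt_problem_linear_sys} as a finite-horizon stochastic optimal control problem of linear-quadratic type and to solve it by backward dynamic programming. The first move is to evaluate the per-step objective in closed form. Under the victim's feedback $u_t=Kx_t$ the unattacked transition is $x_{t+1}\mid x_t \sim \mathcal{N}(Lx_t,\Sigma)$, whereas a Gaussian attack $\bar u_t\sim\mathcal{N}(\theta_t,V_t)$ gives $x_{t+1}\mid x_t\sim\mathcal{N}(Lx_t+B\theta_t,\,BV_tB^\top+\Sigma)$. Writing $R:=B^\top\Sigma^{-1}B$ (positive definite since $B$ has full column rank and $\Sigma\succ0$), the conditional expectation of the log-likelihood ratio is the one-step KL divergence,
\[
\mathbb{E}[z_t\mid x_t]=\tfrac12\big(\theta_t^\top R\theta_t+\trace(RV_t)-\ln\det(I+\Sigma^{-1}BV_tB^\top)\big),
\]
which reduces to $\tfrac12\bar u_t^\top R\bar u_t$ for a deterministic attack $(V_t=0)$. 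Hence the deterministic class is exactly the restriction $V_t\equiv0$ of the Gaussian class, so the inequality $J_g^\star\le J_d^\star$ is immediate, and only strictness and the closed forms remain to be established.

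Next I would run the backward DP with the quadratic ansatz $\tilde W_t(x)=-\beta x^\top P_t x+d_t$, terminal condition $P_T=I$, where $\tilde W_t$ is the optimal cost-to-go of the inner objective (information minus state energy) from time $t$; boundary terms are $O(1)$ and vanish after the $1/T$ normalization. Substituting the transition law, the stagewise problem decouples into a minimization over the mean $\theta_t$ and one over the covariance (through $W:=BV_tB^\top$). The mean part is the quadratic $\tfrac12\theta^\top R\theta-\beta(Lx+B\theta)^\top P_{t+1}(Lx+B\theta)$, minimized by a linear feedback; the covariance part is $g(W)=\trace(F_tW)-\tfrac12\ln\det(I+\Sigma^{-1}W)$ with $F_t=\tfrac12\Sigma^{-1}-\beta P_{t+1}$, which is strictly convex and coercive exactly when $F_t\succ0$. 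Carrying out both minimizations (the covariance one is cleanest after the congruence $W=\Sigma^{1/2}\tilde W\Sigma^{1/2}$, which turns $g$ into $\trace(\tilde F\tilde W)-\tfrac12\ln\det(I+\tilde W)$) yields $W^\star=\tfrac12F_t^{-1}-\Sigma$ and the stated $\theta_t,V_t$, using the full-column-rank left inverse $B^+$. Collecting the quadratic-in-$x$ terms produces the recursion for $P_t$, whose reduction to $P_t=I+L^\top P_{t+1}(I-2\beta\Sigma P_{t+1})^{-1}L$ rests on the algebraic identity $(I-2\beta\Sigma P_{t+1})^{-1}=I+\beta F_t^{-1}P_{t+1}$, verified directly from $\Sigma^{-1}=2F_t+2\beta P_{t+1}$.

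For the strict inequality $J_g^\star<J_d^\star$ I would use that $\nabla_W g(0)=F_t-\tfrac12\Sigma^{-1}=-\beta P_{t+1}\prec0$ (recall $P_{t+1}\succeq I$): the deterministic choice $W=0$ is therefore not stationary for $g$, the reward from injected covariance entering at first order while the extra KL it induces is second order. Since $g$ is strictly convex, its minimizer $W^\star\succ0$ (equivalently $V_t\succ0$) is strictly better at every stage, and this strict gain propagates through the recursion. Finiteness and the threshold $\beta^\star$ come from the requirement that the backward recursion stay well posed: both stagewise minimizations have finite optima precisely when $F_t\succ0$ along the whole horizon, and the attacked chain must remain positive recurrent so that the time-averaged information rate and state energy are finite. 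I would then show that for $\beta$ small enough $P_t$ is monotone and bounded, hence converges to the stationary fixed point $\bar P$; the positivity condition at stationarity, $\tfrac12\Sigma^{-1}-\beta\bar P\succ0$, gives $\beta_0$, while the stability/feasibility condition on the stationary gain $\bar K=B^+(\tfrac12\Sigma^{-1}-\beta\bar P)^{-1}\bar P L$ gives $\beta_1$, so both hold exactly for $\beta<\beta^\star=\min(\beta_0,\beta_1)$. Beyond $\beta^\star$ the objective is unbounded below (the attacker can pump unbounded state energy) or the closed loop is unstable and the attack non-stealthy, which is the behavior anticipated in the discussion preceding the statement.

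The main obstacle is this existence/convergence step. Because the state enters with a negative (reward) sign, \cref{eq:opt_problem_linear_sys} is an \emph{indefinite} LQ problem; the $(I-2\beta\Sigma P_{t+1})^{-1}$ term marks it as a risk-sensitive (LEQG) Riccati equation, so the standard monotone-convergence theory for ordinary LQR does not apply verbatim. The delicate part is proving that $P_t$ stays positive definite, bounded, and convergent to the stabilizing solution $\bar P$ up to, and not beyond, the sharp breakdown point $\beta^\star=\min(\beta_0,\beta_1)$ — the analogue of the neurotic-breakdown threshold in risk-sensitive control — past which no finite-cost stealthy attack exists. Once this well-posedness is secured, the remaining pieces (the KL formula, the two stagewise minimizations, the matrix identity closing the Riccati, and the first-order strictness argument) are routine.
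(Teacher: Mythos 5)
Your proposal follows essentially the same route as the paper's proof: backward dynamic programming with the quadratic ansatz $-\beta x^\top P_t x + \mathrm{const}$, stagewise minimization separately over the attack mean and covariance, the same stationarity condition for the covariance (your $W^\star=\tfrac12 F_t^{-1}-\Sigma$ is algebraically identical to the paper's $\beta F_t^{-1}P_{t+1}\Sigma$), the same matrix identity turning the Riccati recursion into $P_t=I_n+L^\top P_{t+1}(I-2\beta\Sigma P_{t+1})^{-1}L$, and the same limiting analysis producing $\beta^\star=\min(\beta_0,\beta_1)$. The only substantive divergence is in establishing $J_g^\star<J_d^\star$: you view the deterministic attack as the degenerate Gaussian $V_t=0$ and note that $\nabla_W g(0)=-\beta P_{t+1}\prec 0$ rules out $W=0$ as a stationary point of the strictly convex stage cost, whereas the paper evaluates both value functions completely and checks that the Gaussian constant term gains $-\trace\bigl(\ln(I+\Sigma^{-1}R_t)\bigr)<0$ per stage; both arguments rest on the same observation that $P_t$ is identical for the two attack classes, so the comparison reduces to the constants.
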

Not surprisingly, from the theorem we have two immediate facts: (1) randomizing the attack benefits the adversary, which helps fooling the victim; (2) not all values of $\beta$ are feasible. As $\beta$ approaches $\beta^\star$, the attack becomes less stealthy and more impactful. Computing the optimal attack amounts to computing a Riccati-like recursive equation in $P_t$, which is a well-defined recursion only for $\beta<\beta^\star$.

\paragraph{Example.}  Here we analyse the impact of an optimal Gaussian attack on a $2$-dimensional linear system.  We consider a simple system with $A=\begin{bmatrix}0.7 & 0.9\\1.5 & 2
\end{bmatrix}, B=2\begin{bmatrix}
0 &1\\2 & 1
\end{bmatrix}, Q=I_2$. The feedback gain $K=-\begin{bmatrix}0.19 & 0.26125\\0.3325 &0.4275\end{bmatrix}$  guarantees that the closed-loop eigenvalues $\lambda_1,\lambda_2$ are approximately $\lambda_1\approx 0.001$ and $\lambda_2\approx 0.134$. In \cref{fig:atk_linear_sys} are shown the results of a Gaussian attack that starts after $t_{atk}=25$ rounds. We find that $\beta^\star \approx 0.373$, and as $\beta$ approaches $ \beta^\star$ the closed-loop eigenvalues converge to the boundary of the unit disk in the complex plane. This is also confirmed by the right plot of \cref{fig:atk_linear_sys}, which depicts what is the  value of $\mathbb{E}[\|x\|_2^2]$ at stationarity.  Moreover,we also have that $I$ increases as $\beta$ increases, making the attack less stealthy.

\section{Conclusions}\label{sec:conclusions}
In this work, we have introduced a new notion of stealthiness, based on information-theoretical quantities, that can be used to compute stealthy adversarial attacks on the control channel of a Markov Decision Process. The resulting maximization problem is, in general, hard to solve, due to the concavity of the arguments. Nonetheless, the problem can be solved by considering an upper bound on the detectability metric, which results in a problem whose optimal attack policy is stationary and randomized. Finally, we tested the proposed attack on the inventory control problem and a linear dynamical system. Numerical results for both cases confirmed the efficiency of the attack in decreasing performance as well as detectability. These results indicate the need for future work to study the problem of finding ways to make attacks more detectable. An interesting venue of research would be to study the max-min problem of two competing agents that, respectively, try to maximize, and minimize, performance and detectability. Additionally, another research direction is to extend the methods presented here to the case of attacks on the observations.

\bibliographystyle{apalike}
\bibliography{ref}
\section{Appendix}
\subsection*{Broader impact}\label{sec:broader_impact}
Reinforcement learning has rapidly gained interest over the last years and has attracted interest from both the research community and the industry.
However, reinforcement learning is still under heavy development, and more work needs to be done to understand how reinforcement learning can be deployed in real-world settings. Furthermore, the fact that in the future adversarial attacks will be one of the main ways to attack artificial intelligence systems makes the problem studied in this work even more relevant.
The impact of applying reinforcement learning must then be thoroughly investigated before its deployment for real-world applications. In this paper, we considered the problem of balancing the detectability and efficiency of an attack, and quantified what is the hardness of detecting an attack for Markov Decision Processes. However, this work may have a negative impact on society, since we study the problem of making attacks less detectable. Nonetheless, it is necessary to understand this topic to come up with better detectors, and better policies. To this aim, we provided a first analysis of the detectability problem and showed how attacks can impact finite state-action space processes as well as linear dynamical systems. We believe the work presented here paves the way for many interesting research directions that will allow us to have a better understanding of how to design better policies as well as attack detectors.

\subsection*{Limitations of the work}
This work studies the detectability problem of attacks from a theoretical point of view, and limitations are mainly due to the set of assumptions made throughout the papers. An assumption is that the reward signal depends on the poisoned action $\bar a_t$, and not the original action $a_t$. However, as pointed out in \cref{sec:optimal_attack}, this assumption can be relaxed in the hypothesis testing problem by considering an observation $(r_t, s_t, a_t)$ (instead of $(s_t,a_t)$) and by defining the corresponding conditional probabilities. Another assumption made in the text is that the underlying Markov decision process is ergodic and converges to a unique stationary distribution. However, note that this assumption is satisfied by many dynamical systems, such as stable linear systems. Moreover, the methods presented here can be extended also to the case where the process has several stationary probability measures. Finally, the proofs we provide consider finite state-action spaces and can be extended to consider general measurable spaces.

\subsection{Value bound on policies under attack}
In this section we provide a proof for the bound in \cref{proposition:bound_value}. 
\begin{proof}[Proof of \cref{proposition:bound_value}.]
	Let $s\in S$ and write $V^{\phi\circ\pi}(s)$ 
	\begin{align*}
		V^{\phi\circ\pi}(s) &= \mathbb{E}_{a\sim \pi(\cdot|s)}\left[\mathbb{E}_{\bar a \sim \phi(\cdot|s,a)} \left[ r(s,a) + \gamma \mathbb{E}_{s'\sim P(\cdot|s,\bar a)}[V^{\phi \circ \pi}(s')] \right]\right],\\
		&= \mathbb{E}_{a\sim \pi(\cdot|s)}\left[ r(s,a) + \gamma \mathbb{E}_{\bar a \sim \phi(\cdot|s,a)} \left[ \mathbb{E}_{s'\sim P(\cdot|s,\bar a)}[V^{\phi \circ \pi}(s')] \right]\right].
	\end{align*}
	
	If we now consider  $V^{\pi}(s)-V^{\phi\circ\pi}(s)$ it follows that we can write
	\begin{align*}
		V^{\pi}(s)-V^{\phi\circ\pi}(s) &= \gamma\mathbb{E}_{a\sim \pi(\cdot|s)}\left[  \mathbb{E}_{s'\sim P(\cdot|s, a)}[V^{\pi}(s')]- \mathbb{E}_{\bar a \sim \phi(\cdot|s,a)} \left[ \mathbb{E}_{s'\sim P(\cdot|s,\bar a)}[V^{\phi \circ \pi}(s')] \right]\right],\\
		&=  \gamma\mathbb{E}_{a\sim \pi(\cdot|s)}\left[ \mathbb{E}_{\bar a \sim \phi(\cdot|s,a)} \left[\mathbb{E}_{s'\sim P(\cdot|s, a)}[V^{\pi}(s')]- \mathbb{E}_{s'\sim P(\cdot|s,\bar a)}[V^{\phi \circ \pi}(s')] \right]\right].
	\end{align*}
	Now, let $V^{\pi}(s) = V^{\phi\circ\pi}(s)-\Delta(s)$, and take the absolute value of the left hand-side. We can then derive the following inequalities
	\begin{small}
	\begin{align*}
		|V^{\pi}(s)-V^{\phi\circ\pi}(s)| &\leq\gamma\mathbb{E}_{a\sim \pi(\cdot|s)}\left[ \mathbb{E}_{\bar a \sim \phi(\cdot|s,a)} \left[ \left|\mathbb{E}_{s'\sim P(\cdot|s, a)}[V^{\pi}(s')]- \mathbb{E}_{s'\sim P(\cdot|s,\bar a)}[V^{\phi \circ \pi}(s')] \right|\right]\right],\\
		&=\gamma\mathbb{E}_{a\sim \pi(\cdot|s)}\left[ \mathbb{E}_{\bar a \sim \phi(\cdot|s,a)} \left[ \left|\mathbb{E}_{s'\sim P(\cdot|s, a)}[V^{\pi}(s')]- \mathbb{E}_{s'\sim P(\cdot|s,\bar a)}[V^{ \pi}(s') - \Delta(s')] \right|\right]\right],\\
		&\leq\gamma\mathbb{E}_{a\sim \pi(\cdot|s)}\left[ \mathbb{E}_{\bar a \sim \phi(\cdot|s,a)} \left[ 2 \frac{R^\star}{1-\gamma} \|P(\cdot|s, a) - P(\cdot|s,\bar a)\|_{TV} +\mathbb{E}_{s'\sim P(\cdot|s,\bar a)}[\Delta(s')] \right]\right],\\
		&\leq \max_{a,\bar a}\frac{2 R^\star\gamma}{1-\gamma}\|P(\cdot|s, a) - P(\cdot|s,\bar a)\|_{TV} + \gamma \|V^{\pi}-V^{\phi\circ\pi}\|_\infty.
	\end{align*}\end{small}
	The result follows from the last inequality, by taking $\gamma \|V^{\pi}-V^{\phi\circ\pi}\|_\infty$ to the left hand side and maximizing over $s$.
\end{proof}

\subsection{Information bounds and discounted information rate}
In this section we first discuss the case where the victim uses also the reward signal to detect an attack. Next, we provide proofs for the various propositions presented \cref{sec:optimal_attack}.  Lastly, we discuss why the original problem is non-convex.
\subsubsection{Detectability of stealthy attacks using the reward signal}
It is possible to augment the QCD argument in \cref{sec:optimal_attack} by assuming that the victim  is allowed to observe $(r_t,s_t,a_t)$ instead of just $(s_t,a_t)$. For simplicity, we assume that the agent observes the reward $r_t$ upon selecting an action $a_t$. Moreover, assume that $r_t$ is  a random variable distributed according to $q(s_t,a_t)$, where $a_t$ it the action taken on the MDP $M$. Note that if the reward depends on the action taken by the victim, and not the one taken by the adversary, then it is the same setting that we studied in the main body of the paper. Consequently, we must have that the reward depends on the action executed on the MDP.

Then, we consider  a sequence of non-i.i.d. observations $\{(r_t, s_t, a_t)\}_{t\geq 0}$, and  assume the conditional density of $(r_t,s_t,a_t)$ given the previous measurement is $F(\cdot|r_{t-1},s_{t-1}, a_{t-1})$ for $t<\nu$, and $Q(\cdot|r_{t-1},s_{t-1}, a_{t-1})$ otherwise. Observe that we have
\begin{align}
F(r_t,s_t,a_t|r_{t-1},s_{t-1}, a_{t-1}) &= \pi(a_t|s_t) q(r_t|s_t,a_t)  P(s_t|s_{t-1}, a_{t-1}),\\
Q(r_t,s_t,a_t|r_{t-1},s_{t-1}, a_{t-1}) &=  q^\phi(r_t|s_t,a_t) \pi(a_t|s_t) P^\phi(s_t|s_{t-1}, a_{t-1}),
\end{align}
where $q^\phi(r|s,a) = \mathbb{E}_{\bar a\sim \phi(\cdot|s,a)}\left[q(r|s,\bar a)\right]$. The agent needs to decide in each round if she is under attack. Consequently, her decision takes the form of a stopping time $T$ with respect to the filtration $(\mathcal{F}_t)_t = (\sigma(s_0,a_0,r_0,\dots, s_t,a_t,r_t))_t$. The log-likelihood ratio takes the following form
 \begin{align}
 	z_\phi(s_{t-1},a_{t-1},r_{t-1},s_{t}, a_{t}, r_{t}) &=  \ln\frac{Q(r_t,s_t,a_t|r_{t-1},s_{t-1}, a_{t-1})}{F(r_t,s_t,a_t|r_{t-1},s_{t-1}, a_{t-1})},\nonumber\\
 	&=\ln\frac{q^\phi(r_t|s_t,a_t) P^\phi(s_t|s_{t-1}, a_{t-1})}{ q(r_t|s_t,a_t)  P(s_t|s_{t-1}, a_{t-1})},\nonumber\\
 	&=\ln\frac{q^\phi(r_t|s_t,a_t)}{ q(r_t|s_t,a_t)} + \ln\frac{ P^\phi(s_t|s_{t-1}, a_{t-1})}{  P(s_t|s_{t-1}, a_{t-1})} .
\end{align}
This last expression shows that in order to have a well-posed problem we require the rewards to be randomized, otherwise the log-likelihood ratio may not be well-defined. In simple words, attacks can be easily detected.

This new equation of the log-likelihood ratio changes the definition of stealthy attack changes as follows.
\begin{definition}[Information-theoretical stealthy attack with reward signal]\label{def:stealthy_attack_it_reward}
Suppose the reward signal is provided by the MDP. For $\varepsilon\geq 0$ we define an attack policy $\phi$ to be $\varepsilon$-stealthy if $ I(\pi, \phi)\leq \varepsilon$, where $ I(\pi,\phi)$ is the information rate number
\begin{align}
	I(\pi,\phi) &= \mathbb{E}_{s\sim \mu^{\phi \circ \pi}, a\sim \pi(\cdot|s)} \left[ \KL(P^\phi(s,a), P(s,a)) + \KL(q^\phi(s,a), q(s,a))\right],
\end{align}
 $\KL(\cdot,\cdot)$ is the KL-divergence, and $\mu^{\phi \circ \pi}$ is the on-policy distribution induced by $\phi$ and $\pi$.
\end{definition}
Due to the linearity of the arguments, all the reasonings in the main body of the paper can be straightforwardly extended to this case. First, it is possible to show that we can derive an upper bound on $I(\pi,\phi)$ similar to the one that we show in \cref{lemma:upper_bound_I} (see all the proofs in the next subsection):
\begin{equation}
I(\pi,\phi) \leq\mathbb{E}_{s\sim \mu^{\phi \circ \pi}, a\sim \pi(\cdot|s), \bar a \sim \phi(\cdot|s,a)} \left[ \KL(P(s,\bar a), P(s,a)) + \KL(q(s,\bar a), q(s,a))\right] \eqqcolon \bar I(\pi,\phi).
\end{equation}
Additionally, we can similarly prove that we can approximate the information rate $I$ with a discounted one. Define
\begin{equation}\resizebox{\hsize}{!}{%
$
\bar I_{\bar \gamma}(s,a) = \mathbb{E}^{\phi\circ\pi}\left[\sum_{t\geq 0}{\bar \gamma}^t (1-\bar \gamma)\left(\KL(P(s_t, \bar a_t), P(s_t, a_t))+\KL(q(s_t, \bar a_t), q(s_t, a_t))\right) \Big|s_{0}=s, a_0=a\right],
$}
\end{equation}
then, we have that $\lim_{\bar \gamma\to 1}\bar I_{\bar \gamma}(s,a) = I(\pi,\phi)$ for every $(s,a)$. We conclude by saying that an optimal stealthy attack, according to the new upper bound $\bar I(\pi,\phi)$, is computed by solving the following linear program
\begin{equation}
\resizebox{\hsize}{!}{%
$
\max_{\phi \in \Phi'(P,\pi)}  \bar V_{\bar \gamma}^{\phi\circ\pi}(s,a),\textrm{ s.t. } \mathbb{E}_{s\sim \mu_{\bar \gamma}^{\phi\circ\pi},a\sim\pi(\cdot|s),\bar a \sim \phi(\cdot|s,a)}[\KL(P(s,\bar a), P(s,a)) + \KL(q(s, \bar a), q(s, a))] \leq \varepsilon.$}\end{equation}
Also in this case the optimal policy is stationary, Markov and randomized (see next subsection to see how to compute it).

\subsubsection{Proofs of \cref{sec:optimal_attack}}
 We start by providing a proof for \cref{lemma:upper_bound_I}. Next, we discuss \cref{proposition:information_lb_discounted}, \cref{proposition:discounted_information_term_error}, \cref{prop:stealthy_randomized_attack} and \cref{prop:mnimum_achievable_infrate}.
\paragraph{Upper bound on the information rate.} To prove  \cref{lemma:upper_bound_I} we make use of the following lemma with $\rho=\phi$.
\begin{lemma}\label{lemma:upper_bound_I2}
	Assume that $P(s,\bar a) \ll P(s,a)$ for every $(s,a,\bar a) \in \mathcal{C}(\pi,\phi)$. Let  $\rho$ be a probability measure that dominates $\phi$. Then, the information value $I(\pi,\phi)$ can be upper bounded as follows
	\begin{equation}
	\resizebox{\hsize}{!}{%
	$
		I(\phi,\pi) \leq \mathbb{E}_{s\sim \mu^{\phi \circ \pi}, a\sim \pi(\cdot|s)} \left[ \KL(\phi(s,a), \rho(s,a))\right]  + \mathbb{E}_{s\sim \mu^{\phi \circ \pi}, a\sim \pi(\cdot|s),\bar a\sim\phi(\cdot|s,a)}\left[\KL(P(s,\bar a), P(s,a))] \right].
		$}
	\end{equation}
\end{lemma}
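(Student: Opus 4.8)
The plan is to reduce the bound to a single application of the log-sum inequality by viewing $P^\phi(\cdot|s,a)$ as the $s'$-marginal of a joint law over the pair $(\bar a, s')$. Fix $(s,a)$ with $\pi(a|s)>0$ and consider the two joint measures on $(\bar a, s')$ given by the numerator $\phi(\bar a|s,a)\,P(s'|s,\bar a)$ and the denominator $\rho(\bar a|s,a)\,P(s'|s,a)$. Because $\rho$ is a probability measure, summing the denominator over $\bar a$ returns $P(s'|s,a)$, while summing the numerator returns $P^\phi(s'|s,a)$; hence the $s'$-marginals of these two joints are exactly the arguments of the inner term $\KL(P^\phi(s,a),P(s,a))$ appearing in the information rate.

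First I would apply the log-sum inequality termwise in $s'$ to these marginals. This is precisely the statement that marginalizing the $\bar a$ coordinate cannot increase the KL divergence, so
\[
\KL(P^\phi(s,a),P(s,a)) \leq \sum_{\bar a,s'} \phi(\bar a|s,a)\,P(s'|s,\bar a)\,\ln\frac{\phi(\bar a|s,a)\,P(s'|s,\bar a)}{\rho(\bar a|s,a)\,P(s'|s,a)}.
\]
Next I would expand the logarithm additively into $\ln\frac{\phi(\bar a|s,a)}{\rho(\bar a|s,a)}+\ln\frac{P(s'|s,\bar a)}{P(s'|s,a)}$. In the first summand the $s'$-sum of $P(s'|s,\bar a)$ equals one, leaving $\KL(\phi(s,a),\rho(s,a))$; the second summand is exactly $\mathbb{E}_{\bar a\sim\phi(\cdot|s,a)}[\KL(P(s,\bar a),P(s,a))]$. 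Taking the expectation over $s\sim\mu^{\phi\circ\pi}$ and $a\sim\pi(\cdot|s)$ then yields the claimed two-term bound, and the absolute-continuity hypotheses ($P(s,\bar a)\ll P(s,a)$ on $\mathcal{C}(\pi,\phi)$ together with $\rho$ dominating $\phi$) guarantee that every Radon--Nikodym derivative, and hence every KL term, is well defined.

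The only real obstacle is justifying the log-sum step in full generality: for finite state-action spaces the classical discrete log-sum inequality applies immediately, but for general measurable spaces one must instead invoke the Lebesgue-integral version of the inequality, applied to the joint law over $(\bar a,s')$ against its $\bar a$-marginalization. Once that step is in place the remaining computation is purely algebraic. Finally, \cref{lemma:upper_bound_I} is recovered as the special case $\rho=\phi$, for which the first term $\KL(\phi(s,a),\phi(s,a))$ vanishes identically, leaving only the expected transition-level KL.
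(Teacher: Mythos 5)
Your proof is correct and follows essentially the same route as the paper's: both apply the log-sum inequality to the ratio of the $\bar a$-sums $\sum_{\bar a}\phi(\bar a|s,a)P(s'|s,\bar a)$ and $\sum_{\bar a}\rho(\bar a|s,a)P(s'|s,a)$, then split the logarithm into the policy-level and transition-level KL terms. Your framing of the step as marginalization not increasing KL, and your remark on the Lebesgue-integral version for general spaces, match the paper's own presentation.
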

\begin{proof}
	For simplicity, we show the proof for finite state spaces, although it can be proven for general finite measurable spaces. Let  $\rho$ be a probability measure that dominates $\phi$, i.e.,  $\phi \ll \rho$. Define the following measure using $\rho$: $P^\rho(s'|s,a) =\sum_{\bar a} P(s'|s,  a)\rho(\bar a|s,a) =P(s'|s,a)$. Remember that $I$ is the average KL-number between $P^\phi$ and $P$. Then, we can write
	\[
	I(\pi,\phi) = \mathbb{E}_{s\sim \mu^{\phi \circ \pi}, a\sim \pi(\cdot|s)} \left[ \KL(P^\phi(s,a), P(s,a))\right]  = \mathbb{E}_{s\sim \mu^{\phi \circ \pi}, a\sim \pi(\cdot|s)} \left[ \KL(P^\phi(s,a), P^\rho(s,a))\right].
	\]
	It follows that
	\begin{align*}
		I(\pi,\phi) &= \mathbb{E}_{s\sim \mu^{\phi \circ \pi}, a\sim \pi(\cdot|s)} \left[\sum_{s'} P^\phi(s'|s,a) \ln \frac{P^\phi(s'|s,a)}{P^\rho(s'|s,a)}\right],\\
		&=\mathbb{E}_{s\sim \mu^{\phi \circ \pi}, a\sim \pi(\cdot|s)} \left[ \sum_{s'} \left(\sum_{\bar a} P(s'|s,\bar a)\phi(\bar a|s,a)\right) \ln \frac{\sum_{\bar a}  P(s'|s,\bar a)\phi(\bar a|s,a)}{\sum_{\bar a}  P(s'|s, a)\rho(\bar a|s,a)} \right].
	\end{align*}
	From the last expression we can apply the log-sum inequality to obtain
	\[I(\pi,\phi) \leq \mathbb{E}_{s\sim \mu^{\phi \circ \pi}, a\sim \pi(\cdot|s)} \left[ \sum_{s'} \sum_{\bar a} P(s'|s,\bar a)\phi(\bar a|s,a) \ln \frac{  P(s'|s,\bar a)\phi(\bar a|s,a)}{ P(s'|s, a)\rho(\bar a|s,a)} \right] ,\]
	and, as a consequence
	\begin{align*}
		I(\pi,\phi) &\leq \mathbb{E}_{s\sim \mu^{\phi \circ \pi}, a\sim \pi(\cdot|s)} \left[ \KL(\phi(s,a), \rho(s,a)) + \mathbb{E}_{\bar a\sim\phi(\cdot|s,a)}[\KL(P(s,\bar a), P(s,a))] \right].
	\end{align*}
\end{proof}
\paragraph{Discounted information rate.}
We now give the proof of \cref{proposition:information_lb_discounted} and \cref{proposition:discounted_information_term_error}. The idea of the proofs is to consider the information rate $I$ as the gain of a Markov reward process, and use the Laurent Decomposition due to Miller and Veinott \cite{miller1969discrete} to relate $I$ to a discounted version of the information rate.
\begin{proof}[Proof of \cref{proposition:information_lb_discounted}.]
The idea is to consider the Markov reward process (MRP) $\{(x_t, r_t)\}_{t=\nu}^\infty$ where $r_t=r(x_t) =(1-\gamma)\KL(P_1(x_t), P_0(x_t))$. Since the chain is irreducible we have two consequences: (1) it converges to a stationary measure, and (2) the gain of such chain is constant for all $x\in \mathcal{X}$. First, observe that the gain $g=\mathbb{E}_{x\sim\mu}[r(x)]$ is exactly equal to $(1-\gamma)I = (1-\gamma)\lim_{N\to\infty} \frac{1}{N}\sum_{t=\nu}^{\nu+N} \ln\frac{P_1(x_{t}|x_{t-1})}{P_0(x_t|x_{t-1})}$. Due to a result of Miller and Veinott \cite{miller1969discrete} (see also \cite{puterman2014markov}, corollary 8.2.4), if we denote by $V(x)$ the bias of the MRP, since the rewards are bounded, we have that
\[
I_\gamma(x) = (1-\gamma)^{-1}g + V(x) + e_\gamma(x)= I + V(x) + e_\gamma(x),
\]
where  $e_\gamma(x)$ satisfies $\lim_{\gamma\to1}e_\gamma(x)=0$ for all $x$ (see \cite{puterman2014markov} Theorem 8.2.3). The conclusion follows by  noting that $V(x)$  converges to $0$ for $\gamma\to 1$.
\end{proof}
\begin{proof}[Proof of \cref{proposition:discounted_information_term_error}.]
In the following we denote by $I_\gamma$ the $|\mathcal{X}|$-dimensional vector representation of $I_\gamma(x), x\in\mathcal{X}$. We also denote by $P_1$ the $|\mathcal{X}|\times |\mathcal{X}|$ transition matrix for $t\geq \nu$. In light of \cref{proposition:information_lb_discounted} and  theorem 8.2.3 in \cite{puterman2014markov} we have that
\[
I_\gamma(P_1, P_0) = \mathbf{1}I + \sum_{t=\nu}^\infty (-1)^{t+1-\nu}\left(\frac{1-\gamma}{\gamma}H_{P_1} \right)^{t+1-\nu}d
\]
where $\mathbf{1}$ is the unit vector, $d$ is a $|\mathcal{X}|$-column vector whose $j$-th entry is $ d_j = \KL(P_1(x_j), P_0(x_j))$, for some enumeration of the state space, and $H_{P_1}$ is the deviation matrix, which satisfies
\[
H_{P_1} = (I-P_1 +P_1^\star)^{-1}(I-P_1^\star),\quad P_1^\star = \lim_{N\to \infty} \frac{1}{N} \sum_{t=0}^{N-1} P_1^{t}.
\]
 For recurrent and irreducible $P_1$ we have that $P_1^\star = \boldsymbol{1}\otimes\mu^\top$. Furthermore, in aperiodic Markov chains we also have that $H_{P_1} = \sum_{t=\nu}^\infty (P_1^{t-\nu} -P_1^\star)$, from which follows that $\|H_{P_1}\|_1 \leq \sum_{t=\nu}^\infty \max_{x\in \mathcal{X}} \|P_1^{t-\nu}(x)- \mu\|_{TV} \leq \sum_{t=\nu}^\infty L\theta^{t-\nu} = L/(1-\theta)$. Consequently, the series converges if
\[
\gamma > \frac{1}{1+(1-\theta)/L},
\]
where the r.h.s. is clearly a positive number in $(0,1)$. Let $\alpha = (1-\gamma)/\gamma$: it follows that the series is upper bounded by
\begin{align*}
	\left\|\sum_{t=\nu}^\infty (-1)^{t+1-\nu}\left(\frac{1-\gamma}{\gamma}H_{P_1} \right)^{t+1-\nu}d \right\|_1 &\leq \sum_{t=\nu}^\infty \left\|\left(\frac{1-\gamma}{\gamma}H_{P_1} \right)^{t+1-\nu}\right\|_1 \|d\|_\infty ,\\
	&\leq \alpha \frac{LD^\star}{1-\theta} \sum_{k=0}^\infty \left( \frac{\alpha L}{1-\theta} \right)^k,\\
	&\leq \frac{\alpha L  D ^\star}{1-\theta -\alpha L},\\
\end{align*}
where the last term is equal to $\frac{(1-\gamma) L D^\star}{\gamma(1-\theta) - (1-\gamma)L}$.
\end{proof}
\paragraph{Optimal stealthy attacks.}  We now provide a  proof of \cref{prop:stealthy_randomized_attack}.
\begin{proof}[Proof of \cref{prop:stealthy_randomized_attack}.]
Let $\mu_{\bar \gamma}^{\phi\circ\pi}$ be the discounted state distribution induced by $\phi\circ\pi$ for an initial state distribution $p_\nu$ (for unichain models this initial distribution can be arbitrary as long as the elements sum up to 1 \cite{puterman2014markov}). Then, we begin by observing that 
\begin{align*}
\bar I_{\bar \gamma}(s,a) &= \frac{1}{1-\bar \gamma}\sum_{s,a,\bar a}\mu_{\bar \gamma}^{\phi\circ\pi}(s)\pi(a|s)\phi(\bar a|s,a)(1-\bar{\gamma}) \KL(P(s,\bar a), P(s,a)),\\
&= \mathbb{E}_{s\sim \mu_{\bar \gamma}^{\phi\circ\pi},a\sim\pi(\cdot|s),\bar a \sim \phi(\cdot|s,a)}[\KL(P(s,\bar a), P(s,a))].
\end{align*}
Consequently, the problem of maximizing $\bar V_{\bar \gamma}^{\phi\circ \pi}$ while keeping $\bar I_{\bar \gamma}(s,a)\leq \varepsilon$ can be cast as the following problem
\begin{equation}
\max_{\phi \in \Phi'(P,\pi)}  \bar V_{\bar \gamma}^{\phi\circ\pi}(s,a),\textrm{ s.t. } \mathbb{E}_{s\sim \mu_{\bar \gamma}^{\phi\circ\pi},a\sim\pi(\cdot|s),\bar a \sim \phi(\cdot|s,a)}[\KL(P(s,\bar a), P(s,a)] \leq \varepsilon.
\end{equation}
For fixed $\pi$,  let $\xi \in \Delta(S\times A\times A)$, and, spefically, let $\xi(s,a,\bar a) = \mu_{\bar \gamma}^{\phi\circ\pi}(s)\pi(a|s)\phi(\bar a|s,a)$. Then, $\xi$ represents the discounted state-action distribution induced by $\phi\circ\pi$ with discount factor $\bar \gamma$, where the state is $(s,a)$. Since we have the same discount factor also in the objective term  we can make use of the same distribution $\xi$ to equivalently rewrite the previous problem as
\begin{equation}
\resizebox{1\hsize}{!}{%
$
\begin{aligned}
\min_{\xi \in \Delta(S\times A\times A)}& \quad  \frac{1}{1-\bar \gamma} \sum_{s,a,\bar a} \xi(s,a,\bar a) \bar r(s,a,\bar a)\\
\textrm{s.t.} \quad & \sum_{\bar a} \xi(s,a,\bar a) = (1-\bar \gamma)\alpha(s) + \bar \gamma \sum_{s',a',\bar a'} \pi(a|s)P(s|s',\bar a ')\xi(s,a',\bar a'),\quad \forall (s,a)\in S\times A\\
  & \sum_{s, a, \bar a} \xi(s,a,\bar a) \KL(P(s,\bar a), P(s,a))\leq \varepsilon, \\
  & \xi(s,a,\bar a) =0,\quad \forall (s,a,\bar a) \notin  \{(s,a,\bar a): P(s,\bar a) \ll P(s,a) \wedge \pi(a|s)>0\}.
\end{aligned}
$}
\end{equation}
Thanks to theorem 8.9.6 in \cite{puterman2014markov} we know there exists a solution $\xi^\star$ to the problem, and the optimal policy $\phi^\star$ is stationary and randomized, satisfying
$\phi^\star(\bar a|s,a) = \xi^\star(s,a,\bar a) / \sum_{\bar a}\xi^\star(s,a,\bar a)$ for every $(s,a)$. In case $\pi$ is deterministic, the problem can be  simplified to
\begin{equation}
\begin{aligned}
\min_{\xi \in \Delta(S\times A)}& \quad  \frac{1}{1-\bar \gamma} \sum_{s,\bar a} \xi(s,\bar a) \bar r(s,\pi(s),\bar a)\\
\textrm{s.t.} \quad & \sum_{\bar a} \xi(s,\bar a) = (1-\bar \gamma)\alpha(s) + \bar \gamma \sum_{s',\bar a'} P(s|s',\bar a ')\xi(s,\bar a'),\quad \forall s\in S \\
  & \sum_{s, \bar a} \xi(s,\bar a) \KL(P(s,\bar a), P(s,\pi(s)))\leq \varepsilon, \\
  & \xi(s,\bar a) =0,\quad \forall (s,\bar a) \notin  \{(s,\bar a): P(s,\bar a) \ll P(s,\pi(s))\}.
\end{aligned}
\end{equation}
\end{proof}
\paragraph{Hardness of detecting an attack.} Finally, note that \cref{prop:mnimum_achievable_infrate} can be easily solved by using the following linear program
\begin{equation}
\begin{aligned}
\min_{\xi \in \Delta(S\times A\times A)}& \quad  \sum_{s, a, \bar a} \xi(s,a,\bar a) \KL(P(s,\bar a), P(s,a))\\
\textrm{s.t.} \quad & \sum_{\bar a} \xi(s,a,\bar a) =  \sum_{s',a',\bar a'} \pi(a|s)P(s|s',\bar a ')\mu(s,a',\bar a'),\quad \forall (s,a)\in S\times A\\\
  & \sum_{s,a}\left( \sum_{\bar a} \xi(s,a, \bar a)\right) r(s,a) \leq \rho\\
  & \xi(s,a,\bar a) =0,\quad \forall (s,a,\bar a) \notin  \{(s,a,\bar a): P(s,\bar a) \ll P(s,a) \wedge \pi(a|s)>0\}.
\end{aligned}
\end{equation}
However, in case one needs to consider the discounted reward, it is possible to consider the following problem
\begin{equation}
\min_{\phi \in \Phi'(P,\pi)} \bar I(\pi,\phi) \textrm{ s.t. }\mathbb{E}_{s\sim \mu_{\gamma}^{\phi\circ\pi},a\sim\pi(\cdot|s)}[r(s,a)] \leq \rho
\end{equation}
where we considered the  discounted reward instead of the ergodic one through the discounted stationary distribution. Note, moreover, that the information rate is computed using the on-policy distribution $\mu^{\phi\circ\pi}$. To solve the problem one can rewrite it by considering the state-action distributions $\xi_0(s,a,\bar a) = \mu^{\phi\circ\pi}(s)\pi(a|s)\phi(\bar a|s,a)$ and $\xi_1(s,a,\bar a) =\mu_{\gamma}^{\phi\circ\pi}(s)\pi(a|s)\phi(\bar a|s,a)$. However, that results in a problem with non-convex constraints since the policy $\phi$ in each state $(s,a)$ must be the same, i.e., we require $\xi_0(s,a,\bar a) \|\xi_1(s,a)\|_1 = \xi_1(s,a,\bar a)\|\xi_0(s,a)\|_1$. A simple workaround is to approximate $\bar I$ using $\bar I_\gamma$, as long as $\gamma$ is sufficiently close to $1$. This yields the following problem
\begin{equation}
\min_{\phi \in \Phi'(P,\pi)} \bar I_\gamma(\pi,\phi) \textrm{ s.t. }\mathbb{E}_{s\sim \mu_{\gamma}^{\phi\circ\pi},a\sim\pi(\cdot|s)}[r(s,a)] \leq \rho.
\end{equation}
which can be computed by solving the following linear program
\begin{equation}
\begin{aligned}
\min_{\xi \in \Delta(S\times A\times A)}& \quad  \sum_{s, a, \bar a} \xi(s,a,\bar a) \KL(P(s,\bar a), P(s,a))\\
\textrm{s.t.} \quad &\sum_{\bar a} \xi(s,a,\bar a) = (1- \gamma)\alpha(s) +  \gamma \sum_{s',a',\bar a'} \pi(a|s)P(s|s',\bar a ')\xi(s,a',\bar a'),\quad \forall (s,a)\in S\times A\\
  & \sum_{s,a}\left( \sum_{\bar a} \xi(s,a, \bar a)\right) r(s,a) \leq \rho,\\
  & \xi(s,a,\bar a) =0,\quad \forall (s,a,\bar a) \notin  \{(s,a,\bar a): P(s,\bar a) \ll P(s,a) \wedge \pi(a|s)>0\}.
\end{aligned}
\end{equation}
\newpage
\section{Examples and numerical results}

\textbf{Hardware and software setup.} All experiments were executed on a stationary desktop computer,
featuring an Intel Xeon Silver 4110 CPU, 48GB of RAM and a GeForce GTX 1080 graphical card. Ubuntu 18.04 was installed on the computer.

\textbf{Code and libraries.} The code is released with the MIT license. Please, check the README file for instructions to run the code. Python 3.5 is required to run the code, as well as the following libraries: NumPy \cite{harris2020array}, SciPy \cite{2020SciPy-NMeth}, Matplotlib \cite{Hunter:2007}, CVXPY \cite{diamond2016cvxpy} and Jupyter Notebook \cite{jupyter}. Simulations take approximately 1 day to run.

\subsection{The inventory control problem}
\paragraph{Description of the example.} The inventory control problem is a widely known problem in literature (see, e.g., \cite{szepesvari2010algorithms}), and concerns the problem of managing an inventory of fixed maximum size $N$ in  face of uncertain demand. In each round the agent must decide the amount of items to be ordereded for the next day. The cost of purchasing $a_t$ items is $k\indi_{\{{a_t>0}\}}+ca_t$, where $k>0$ is a fixed cost of ordering nonzero items, and $c>0$ is a fixed unitary price. Upon selling $\ell$ items the agent is paid an amount of $ p\ell$, where $p>0$ is the price of a single item. Finally, there is also a cost of holding an inventory of size $s>0$, that is $hs$, with $h>0$ and $p>h$. The demand $d_t$ at time $t$ is modeled according to a Poisson distribution, with demand rate $\lambda$. Then, given $s_t$ and $a_t$, the size of the inventory the next round it $s_{t+1} = \max(0, \min(N, s_t+a_t) - d_{t+1} )$,  with reward $r(s_t, a_t, s_{t+1}) = -k\indi_{\{{a_t>0}\}}-hx_t -c\max(0, \min(N, s_t+a_t) - x_{t} ) + p\max(0, \min(N, s_t+a_t) - x_{t+1} ) $. To run the simulations, we have chosen $N=35$, $k=3, c=2, h=2, p=4, \lambda=6$. We used $\gamma=\bar \gamma=0.95$ to compute both the agent's policy and the adversary's policy. The attacks were applied after the system had already converged to the stationary distribution, after $\nu=25$ steps. Results were averaged over 100 simulations, and shadowed area indicate a confidence interval of $99\%$ probability.

\paragraph{Attack detection.} We also evaluated the detectability of these attacks using the optimal CUSUM detector $T_c=\inf\{t: \max_{1\leq k\leq t} \sum_{n=k}^t z_\phi(s_n, a_n, s_{n+1}) \geq c\}$, and a Generalized Likelihood Ratio (GLR) rule $T_g =\inf\{t:  \max_{1\leq k\leq t} \sup_{P_\phi} \sum_{n=k}^t z_\phi(s_n, a_n, s_{n+1}) \geq c\}$. To implement the GLR rule we estimate the transition kernel $P^\phi$, and used a window-limited GLR rule \cite{lai1998information} with $38$ parallel statistics, with a delay of $5$ samples between each statistics. Specifically, the $n$-th statistic computes an estimate according to the last $5n$ samples. 

\subsection{Optimal attack on linear dynamical systems}

We are interested in the following systems
\[x_{t+1} = Ax_t + Ba_t + w_t\]
where $x_0=0$, $B\in \mathbb{R}^{n\times m}$ is full column-rank and $w_t\sim \mathcal{N}(0,\Sigma)$.  We assume for simplicity that the adversarial policy $\phi$ is additive in the control action, so that $a_t = u_t+\bar u_t$, where $u_t$ is the main agent's control action and $\bar u_t$ is the adversarial's action. We assume $u_t = Kx_t$, where $K$ is computed according to standard control techniques (e.g., LQR), and that $\bar u_t$ is a random variable measurable with respect to the sigma algebra $\sigma(x_t)$ , and we can write that $\bar u_t \sim \phi(x_t)$ (since  $u_t$ is deterministic, it suffices to consider random variables measurable with respect to $\sigma(x_t)$).

We are interested in the following finite-horizon optimization problem

\[
\min_{\bar u_0, \dots, \bar u_{T-1}} \frac{1}{T} \mathbb{E}\left[\sum_{t=1}^{T-1}\ln\frac{P^\phi(x_{t+1}|x_t, u_t)}{P(x_{t+1}|x_t, u_t)}- \sum_{t=1}^T \beta x_t^\top x_t \right].
\]
We can rewrite the previous objective by noting that the first quantity is an expectation of KL-divergences
\begin{align*}\mathbb{E}\left[\sum_{t=1}^{T-1}\ln\frac{P^{\bar u_t}(x_{t+1}|x_t, u_t)}{P(x_{t+1}|x_t, u_t)}\right]&=\sum_{t=1}^{T-1}\mathbb{E}\left[\mathbb{E}_{x}\left[\ln\frac{P^{\bar u_t}(x|x_t, u_t)}{P(x|x_t, u_t)}\Big|x_t,u_t\right]\right],\\
&=\sum_{t=1}^{T-1}\mathbb{E}\left[D(P^{\bar u_t}(x_t,u_t), P(x_t,u_t))\right]
\end{align*}
Since the control action $u_t$ is a deterministic function of $x_t$, we write $D(P^{\bar u_t}(x_t), P(x_t))$ in the following.
Letting $I_t = D(P^{\bar u_t}(x_t), P(x_t))$ we can write
\[
\min_{\bar u_0, \dots, \bar u_{T-1}} \frac{1}{T} \mathbb{E}\left[\sum_{t=1}^{T-1} I_t- \beta\sum_{t=1}^T  x_t^\top x_t \right].
\]
Solving the optimization problem is not straightfoward, due to the dependency of $P^{\bar u}$ on the random variable $\bar u$. To show the hardness of solving such problem, we shall take a dynamic programming approach. 

We also state the following simple lemma that will be useful in the calculations.
\begin{lemma}[Lemma 3.3 in \cite{aastrom2012introduction}]\label{lemma:expectation_mean_square}
	Let $x \in \mathbb{R}^n$ be a normal random variable with mean $\mu$ and covariance $\Sigma$. Then, for any $n$-square matrix $S$ we have
	\begin{equation}
		\mathbb{E}[x^\top S x] = \mu^\top S \mu + \trace(S\Sigma).
	\end{equation}
\end{lemma}

\subsection{Deterministic optimal attacks}
We first consider the case where $\bar u_t$ is a deterministic function of $x_t$. Consider a dynamic programming approach, and define
 \[
 J_t^\star(x_{t})=\min_{\bar u_{t}} \mathbb{E}[I_{t}- \beta x_{t}^\top x_{t} + J_{t+1}^\star(x_{t+1})|x_{t}] 
 \]
 with $J_T^\star(x_T) = -\beta x_T^\top x_T$. Moreover, note that $I_t$ has the following expression for every $t$: $I_t = \KL(P(x_t, \bar u_t), P(x_t, u_t)) = \frac{1}{2} \bar u_t^\top B^\top \Sigma^{-1}B \bar u_t$. 
 
We now prove by induction that $J_t^\star(x) = -\beta (x^\top P_t x + p_t)$, with $P_t \succ 0$ and $p_t\geq 0$. At time $t=T$ we simply have  $P_T=I_n$ and $p_T=0$. 

\paragraph{Step $t=T-1$.} At time $T-1$ we have
 \begin{align*}
 J_{T-1}^\star(x_{T-1})&=\min_{\bar u_{T-1}} \mathbb{E}\left[\frac{1}{2} \bar u_{T-1}^\top B^\top \Sigma^{-1}B \bar u_{T-1} -\beta x_{T-1}^\top x_{T-1}+ J_{T}^\star(x_{T}) |x_{T-1}\right],\\
 &=\min_{\bar u_{T-1}} \mathbb{E}\Big[\frac{1}{2} \bar u_{T-1}^\top B^\top \Sigma^{-1}B \bar u_{T-1} -\beta (x_{T-1}^\top (I+L^\top L)x_{T-1} +\trace\Sigma\\
 &\qquad +\bar u_{T-1}^\top B^\top B \bar u_{T-1} +2x_{T-1}^\top L^\top B \bar u_{T-1}) |x_{T-1}\Big],\\
 &=\min_{\bar u_{T-1}} \mathbb{E}\Big[
 -\beta x_{T-1}^\top ( I_n + L^\top L )x_{T-1} +\bar u_{T-1}^\top B^\top \left(\frac{1}{2}   \Sigma^{-1}-\beta I_n \right)B u_{T-1}
 \\
  &\qquad -2\beta x_{T-1}^\top L^\top B \bar u_{T-1}  -\beta \trace \Sigma|x_{T-1}\Big].
 \end{align*}
 Note that in the second equality we made use of the fact that $\mathbb{E}[J_T^\star(x_T)|x_{T-1}] = -\beta \mathbb{E}[x_T^\top x_T|x_{T-1}] =  -\beta \mathbb{E}[(Lx_{T-1}+B\bar u_{T-1}+w_{T-1})^\top (Lx_{T-1}+B\bar u_{T-1}+w_{T-1})|x_{T-1}]$, and then used \cref{lemma:expectation_mean_square}.

 Now, let $y_t=B\bar u_t$ and $F_{T-1} = \frac{1}{2}   \Sigma^{-1}-\beta I_n $, then
 \begin{align*}
  J_{T-1}^\star(x_{T-1})&=\min_{\bar u_{T-1}} \mathbb{E}\Big[
  -\beta x_{T-1}^\top ( I_n + L^\top L )x_{T-1} + y_{T-1}^\top F_{T-1}(y_{T-1} - \beta  F_{T-1}^{-1}Lx_{T-1}) \\
  &\qquad -\beta y_{T-1}^\top L x_{T-1} -\beta\trace \Sigma|x_{T-1}\Big].
 \end{align*}
 Add $\pm \beta^2 x^\top L^\top F^{-1} L x_{T-1}$ to get
  \begin{align*}
   &=\min_{\bar u_{T-1}} \mathbb{E}[
   -\beta x_{T-1}^\top ( I_n + L^\top L + \beta L^\top F_{T-1}^{-1} L)x_{T-1} \\
   &\qquad+ (y_{T-1}-\beta F_{T-1}^{-1} L x_{T-1})^\top F_{T-1} ( y_{T-1} - \beta F_{T-1}^{-1}  Lx_{T-1})-\beta \trace \Sigma|x_{T-1}].
  \end{align*}
 If we impose $0<\beta<\lambda_{min}[(2\Sigma)^{-1}]= (2\lambda_{max}(\Sigma))^{-1}$, then there is a unique  minimum, and $F_{T-1}$ is invertible. The solution is given by $\bar u_{T-1}=  B^+ y_{T-1}$, where $
 y_{T-1} =  \beta F_{T-1}^{-1} L x_{T-1}$. Thus
 \[
 J_{T-1}^\star(x_{T-1}) = -\beta x_{T-1}^\top ( I_n + L^\top L + \beta L^\top F_{T-1}^{-1} L)x_{T-1} -\beta \trace \Sigma
 \]
 The formula for $J_t^\star$ clearly holds for $T-1$ with $P_{T-1} =  I_n +  L^\top L +\beta L^\top F_{T-1}^{-1} L$ and $p_{T-1} =  \trace\Sigma$.
 
 \paragraph{Induction step.}  
   Then, proceeding by induction, assuming that the formula for $J_t^\star$ holds for $t+1$ we show that it holds also for $t$.
 First observe
 \begin{align*}
 \mathbb{E}[J_{t+1}^\star(x_{t+1})|x_t] &= \mathbb{E}[-\beta(x_{t+1}^\top P_{t+1} x_{t+1} + p_{t+1})|x_t],\\
 &=-\beta[(Lx_t+B\bar u_t)^\top P_{t+1}(Lx_t+B\bar u_t) + \trace(\Sigma P_{t+1}) + p_{t+1}],
 \end{align*}
 then
 \begin{align*}
  J_{t}^\star(x_{t})&=\min_{\bar u_{t}} \mathbb{E}\left[\frac{1}{2} \bar u_{t}^\top B^\top \Sigma^{-1}B \bar u_{t} -\beta x_{t}^\top x_{t}+ J_{t}^\star(x_{t+1}) |x_{t}\right],\\
  &=\min_{\bar u_{t}} \mathbb{E}\Big[\frac{1}{2} \bar u_{t}^\top B^\top \Sigma^{-1}B \bar u_{t} -\beta (x_{t}^\top (I+L^\top P_{t+1}L)x_{t} +\trace(\Sigma P_{t+1})+p_{t+1}\\
  &\qquad +\bar u_{t}^\top B^\top P_{t+1}B \bar u_{t} +2x_{t}^\top L^\top  P_{t+1}B \bar u_{t}) |x_{t}\Big],\\
  &=\min_{\bar u_{t}} \mathbb{E}\Big[
  -\beta x_{t}^\top ( I_n + L^\top  P_{t+1}L )x_{t} +\bar u_{t}^\top B^\top \left(\frac{1}{2}   \Sigma^{-1}-\beta P_{t+1} \right)B u_{T-1}
  \\
   &\qquad -2\beta x_{T-1}^\top L^\top P_{t+1}B \bar u_{T-1}  -\beta (p_{t+1}+\trace (\Sigma P_{t+1}))|x_{T-1}\Big].
  \end{align*}
Define $F_t =  \left(\frac{1}{2}\Sigma^{-1} - \beta P_{t+1}\right)$. Similarly to before, by introducing $y_t=B\bar u_t$ we obtain
  \begin{align*}
   J_{t}^\star(x_{t})&=\min_{\bar u_{t}} \mathbb{E}\Big[
   -\beta x_{t}^\top ( I_n + L^\top P_{t+1}L +\beta L^\top P_{t+1} F_t^{-1} P_{t+1} L)x_{t} \\
   &\qquad+ (y_{t}-\beta F_t^{-1} P_{t+1}L x_{t})^\top F_t ( y_{t} - \beta F_t^{-1}  P_{t+1}Lx_{t})-\beta (p_{t+1}+\trace (\Sigma P_{t+1}))|x_{t}\Big].
  \end{align*}
  The solution exists and is unique if $F_t\succ 0$, that is, we need $ \beta\leq \frac{1}{2 \|\Sigma\|_2 \|P_{t+1}\|_2}$. 
  
  Then, the solution at time $t$ is
  \[
  \bar u_t = \beta B^+F_t^{-1}  P_{t+1}Lx_{t}
  \]
  and the cost becomes  
  \[
  J_{t}^\star(x_{t}) = -\beta x_{t}^\top ( I_n + L^\top P_{t+1}L +\beta L^\top P_{t+1} F_t^{-1} P_{t+1} L)x_{t} - \beta (p_{t+1}+\trace (\Sigma P_{t+1}))
  \]
  where
  \begin{align*}
	  P_{t} &=  I_n + L^\top P_{t+1}L +\beta L^\top P_{t+1}F_t^{-1} P_{t+1}L,\\
	  p_t &=  p_{t+1}+\trace(\Sigma S_{t+1})
  \end{align*}
  This proves the induction. Moreover, we observethat $P_t$ is positive definite if $P_{t+1}$ is positive definite and $\beta$ satisfies the condition $\beta\leq \frac{1}{2 \|\Sigma\|_2 \|P_{t+1}\|_2}$ for every $t$.
  
   \paragraph{Compact expression for $P_t$.}  
  Note that the equation of $P_t$ can be written in a compact. First, write 
    $
  	  P_t = I_n + L^\top P_{t+1}(I_n +\beta F_{t+1}^{-1} P_{t+1})L
    $
    and use the identity
    \[
  	  (U^{-1}+VZ^{-1}W)^{-1}=U-UV(Z+WUV)^{-1}WU.
    \]
    By setting $U=I, V=\beta , W= P_{t+1},Z=F_{t+1}$ we obtain
    \[
    P_t = I_n + L^\top P_{t+1}\left(I-2\beta\Sigma P_{t+1}\right)^{-1}L.
    \]

  \paragraph{Value of $\beta$.} The dependency on $\beta$ of  $P_t$ does not make it clear how $\beta$ should be chosen, especially when the horizon $T$ goes to infinity. To help the analysis, we write $P_{t,T}$ to also highlight the dependency on the horizon.

  To conduct the analysis we do the following:
  \begin{enumerate}
	  \item[(1)] Observe that for small values of $\beta$ there exists ${\cal B}$ such that $\beta \in {\cal B}$ makes the recursion of $P_{t,T}$  well defined.
	  \item[(2)] If the recursion if well defined, then $ P_{0,T} \succeq P_{t,T} \succ 0$ for every $t$, and $P_{t,T} \preceq P_{t,T+1}$.
	  \item[(3)] Then, if $\beta$ satisfies $\beta\leq \frac{1}{2 \|\Sigma\|_2 \|P_{0,T+1}\|_2}$ then it also satisfies $\beta\leq \frac{1}{2 \|\Sigma\|_2 \|P_{t,T}\|_2}$.
	  \item[(4)] Let the horizon $T\to\infty$, and find the value of $\beta^\star$ for which the stationary solution $\bar P$ is well-defined.
	  \item[(5)] We conclude by observing that $\bar P \succeq P_{t,T}$ for every $t$ and $T$. Therefore, because of (3) any value of $\beta \in (0,\beta^\star)\subset {\cal B}$ makes the recursion well-defined for any horizon $T$.
  \end{enumerate}

  (1) First, note that for $\beta \to 0$ then $P_{t,T}$ converges to the solution of the Lyapunov equation $P_{t,T} = I_n +L^\top P_{t+1,T} L$  for any $(t,T)$. By continuity, there exists a neighborhood ${\cal B}$ of $\beta$ for which  $P_{t,T}$ exists for every $\beta \in {\cal B}$ (to show this  we can also employ Theorem 6.8 in \cite{kato2013perturbation}, which states that for a symmetric operator $T(x)$, continuous and differentiable, also the eigenvalues are $C^1$ function of $x$.)
  
  (2) For the recursion to be well-defined we therefore need $I_t-\beta x_t^\top x_t$ to be negative definite (otherwise the induction fails, and for $\beta\to 0$ the solution $P$ converges to a negative definite matrix, which is not possible since it converges to the Lyapunov solution of the unperturbed system).
    Then, if this condition is satisfied,  for every $\beta \in {\cal B}$ we have $P_{t+1,T} \preceq P_{t,T}$. This follows from the simple fact that if  $I_{t+1} -\beta x_{t+1}^\top x_{t+1}$ is negative definite then $J_{t}^\star(x) \leq  J_{t+1}^\star(x)$, which implies $P_{t+1,T} \preceq P_{t,T}$ .   Next, observe that $P_{t,T+1} \succeq P_{t,T}$ (follows easily by analyzing the recursion in the previous section).
  
  (4-5) Therefore, we let $T\to\infty$ and study the stationary solution to understand what is the maximum value of $\beta$.
 Define the steady state Riccati equation
  \begin{align*}
      \bar P&=(I_n+L^\top \bar PL + \beta L^\top  \bar P  F^{-1}  \bar PL),\\
      F&= \frac{1}{2}\Sigma^{-1}-\beta \bar P
  \end{align*}
  At this point, to find the maximum value of $\beta$ we need to find the minimum value of $\beta$ for which the recursion is not well defined. Let $\bar K=B^+F^{-1} \bar P L$, and define $\beta_0=\inf\{\beta>0: \frac{1}{2}\Sigma^{-1}-\beta \bar P\prec  0\}$ and $\beta_1=\inf\{\beta>0: \frac{\beta}{2}\bar K^\top B^\top \Sigma^{-1}B\bar K - I\succ 0\}$.  From which follows that $\beta^\star$ is given by $\beta^\star =\min(\beta_0,\beta_1)$.

\subsection{Gaussian optimal attacks}
The previous discussion on $\beta$ follows from the fact that the adversary just prefers to make the system unstable for large values of $\beta$. For large values of $\beta$ it is simply impossible not to be detected, therefore the adversary prefers to make the system unstable.  We wonder if this can be changed by considering a random attack. 

Moreover, we also wonder if random attacks are in general better than deterministic attacks.

Minimizing $J^\star_t(x_t)$ over some distribution $\phi_t(x_t)$ from which $\bar u_t$ is drawn from can't be easily solved, since it involves solving an integral equation.
However, we can impose a parametrized distribution on $\phi_t$ and solve for the parameters. We can for example impose that $\bar u_t \sim \mathcal{N}(\theta_t, V_t)$.

In this case $I_t$ is computed as follows
\[
I_t = \frac{1}{2} \left[\trace(\Sigma^{-1}BV_tB^\top)+\theta_t^\top B^\top \Sigma^{-1} B\theta_t - \ln|I+\Sigma^{-1}BV_tB^\top|\right]
\]
Assume again that $J_t^\star(x) = -\beta x^\top P_t x -p_t$, where $P_t\succ 0$ and $p_t\geq 0$. It clearly holds at time $T$ for $P_T=\beta I, p_T=0$. Finally, for simplicity, let $R_t= BV_tB^\top$.

By induction it is possible to prove that $\theta_t$ is the same as $\bar u_t$ in the deterministic case, and $V_t$ does not depend on $x_t$, but solely on $P_{t+1}$. The condition on $\beta$ remains the same one that we found in the previous section.

\paragraph{Step $t=T-1$.} Remember that $J_T^\star(x_T) = -\beta x_T^\top x_T$, then
\begin{align*}
\mathbb{E}[J_T^\star(x_T)|x_{T-1}] &= -\beta \mathbb{E}[(Lx_{T-1}+B\theta_{T-1})^\top(Lx_{T-1}+B\theta_{T-1}) + \trace(\Sigma + R_{T-1})|x_{T-1}].
\end{align*}

Then, at time $T-1$ we have
 \begin{align*}
 J_{T-1}^\star(x_{T-1})&=\min_{\theta_{T-1}, R_{T-1}} \mathbb{E}\Big[\frac{1}{2} \left(\trace(\Sigma^{-1}R_{T-1})+\theta_t^\top B^\top \Sigma^{-1} B\theta_t - \ln|I+\Sigma^{-1}R_{T-1}|\right) \\
 &\qquad -\beta x_{T-1}^\top x_{T-1}+ J_{T}^\star(x_{T}) |x_{T-1}\Big],\\
 &=\min_{\theta_{T-1}, R_{T-1}} \mathbb{E}\Big[\frac{1}{2} \left(\trace(\Sigma^{-1}R_{T-1})+\theta_t^\top B^\top \Sigma^{-1} B\theta_t - \ln|I+\Sigma^{-1}R_{T-1}|\right) \\
 &\qquad -\beta (x_{T-1}^\top (I+L^\top L)x_{T-1} +\trace(\Sigma+R_{T-1})\\
 &\qquad +\theta_{T-1}^\top B^\top B \theta_{T-1} +2x_{T-1}^\top L^\top B \theta_{T-1}) |x_{T-1}\Big].
 \end{align*}
Then, the solution is clearly given  $\theta_{T-1}=\beta B^+F_{T-1}^{-1}L x_{T-1}$, where $F_t$ was defined in the previous section.
We can find the optimal solution for $R_{T-1}$ by solving the equation
\[
\frac{1}{2}\Sigma^{-1} -\beta I -\frac{1}{2} \Sigma^{-1}(I+\Sigma^{-1}R_{T-1})^{-1}=0
\]
Therefore, we can derive the following
\begin{align*}
0&= \frac{1}{2}I- \beta \Sigma - \frac{1}{2}(I+\Sigma^{-1}R_{T-1})^{-1},\\
&= \frac{1}{2}(I+\Sigma^{-1}R_{T-1})- \beta \Sigma(I+\Sigma^{-1}R_{T-1}) - \frac{1}{2} I,\\
&=  \frac{1}{2}(I+\Sigma^{-1}R_{T-1})- \beta\Sigma -\beta R_{T-1} - \frac{1}{2} I.
\end{align*}
Consequently
\[
R_{T-1}\left(\frac{1}{2}\Sigma^{-1}-\beta I\right) =\beta\Sigma.
\]
which implies
$R_{T-1} = \beta \left(\frac{1}{2}\Sigma^{-1}-\beta I\right)^{-1}\Sigma$. Therefore 
\[V_{T-1} = \beta  B^+ \left(\frac{1}{2}\Sigma^{-1}-\beta I\right)^{-1}\Sigma (B^+)^\top.\]
Then
 \begin{align*}
 J_{T-1}^\star(x_{T-1}) &= -\beta x_{T-1}^\top ( I_n + L^\top L +\beta L^\top BF^{-1}B^\top L)x_{T-1} -\beta \trace(\Sigma+R_{T-1})\\ &\qquad+\frac{1}{2}\left[\trace(\Sigma^{-1}R_{T-1}) - \ln|I+\Sigma^{-1}R_{T-1}\right]
 \end{align*}
\paragraph{General solution.} 
 Iterating we can easily find that $\theta_t$ is equal to the solution of the deterministic case, while  the solution of $R_t$ is given by the condition
\[
\frac{1}{2}\Sigma^{-1} -\beta P_{t+1} -\frac{1}{2} \Sigma^{-1}(I+\Sigma^{-1}R_{t})^{-1}=0.
\]
Therefore, using also the symmetry of the matrices, we can derive
\begin{align*}
  &=\frac{1}{2}\Sigma^{-1}(I+\Sigma^{-1}R_{t}) -\beta P_{t+1}(I+\Sigma^{-1}R_{t}) -\frac{1}{2} \Sigma^{-1},\\
 &=\frac{1}{2} \Sigma^{-2}R_{t} -\beta P_{t+1}(I+\Sigma^{-1}R_{t}),\\
 &=\frac{1}{2} R_{t}\Sigma^{-1} -\beta (\Sigma+R_{t})P_{t+1},\\
 &= R_t\left(\frac{1}{2} \Sigma^{-1}-\beta P_{t+1}\right)  -\beta \Sigma P_{t+1}
\end{align*}
Hence,
\[
R_t = \left(\frac{1}{2} \Sigma^{-1}-\beta P_{t+1}\right)^{-1}\beta P_{t+1}\Sigma = \beta F_{t}^{-1}P_{t+1}\Sigma.
\]

  What remains to prove  is to show that
  \[
  -p_t=-\beta \trace((\Sigma+R_{t})P_{t+1})+\frac{1}{2}\left[\trace(\Sigma^{-1}R_{t}) - \ln|I+\Sigma^{-1}R_{t}|\right]\leq 0
  \]
  is non-positive. Using that for square non-singular matrices $A$ we have $\trace \ln A = \ln |A|$ we find
\begin{align*}
	0\geq& -\beta \trace((\Sigma+R_{t})P_{t+1})+\frac{1}{2}\left[\trace(\Sigma^{-1}R_{t}) - \ln|I+\Sigma^{-1}R_{t}|\right],\\
	=& \trace\left(-\beta (\Sigma+R_{t})P_{t+1} +\frac{1}{2}(\Sigma^{-1}R_{t}-\ln (I+\Sigma^{-1}R_{t})\right),\\
	=& \trace\left(R_{t}\left(\frac{1}{2}\Sigma^{-1}-\beta P_{t+1}\right) -\beta \Sigma P_{t+1}-\ln (I+\Sigma^{-1}R_{t})\right)
\end{align*}
Using the fact that $R_t\left(\frac{1}{2} \Sigma^{-1}-\beta P_{t+1}\right)  -\beta \Sigma P_{t+1}=0$ we derive
\begin{align*}
 \trace\left(\beta \Sigma P_{t+1} -\beta \Sigma P_{t+1}-\ln (I+\Sigma^{-1}R_{t})\right)=-\trace\left(\ln (I+\Sigma^{-1}R_{t})\right) \leq 0
\end{align*}
  As required. Therefore, since $P_t$ is the same in both attacks, by comparing $p_t$ one  can  conclude that the value of the problem using a Gaussian attack is lower than the value of a deterministic attack. From the attacker's perspective this implies that a Gaussian attack is better than a deterministic one.
\end{document}